 \newtheorem{thm}{Theorem}[section]
 \newtheorem{lem}[thm]{Lemma}
 \newtheorem{prop}[thm]{Proposition}
 \theoremstyle{definition}
 \theoremstyle{remark}
 \newtheorem{rem}[thm]{Remark}
 \numberwithin{equation}{section}
\begin{document}

%
%
%
%
%
%
%
%
%

\title[Singular selfadjoint perturbations of selfadjoint operators]
 {Singular selfadjoint perturbations of unbounded selfadjoint operators. Reverse approach}

\author[Adamyan]{V.M.Adamyan }

\address{%
 Department  of Theoretical Physics and Astronomy\\
Odessa National I.I. Mechnikov  University\\
65082 Odessa\\ 
Ukraine}

\email{vadamyan@onu.edu.ua}

\thanks{This paper is an expanded version of the talk presented by the author at the conference "Spectral Theory and Applications", Stokholm, Sweden, March 13 – 15, 2016 }

\subjclass{Primary 47B25; Secondary 47F05}

\keywords{Seladjoint operators, resolvent, M.G. Krein resolvent formula, zero-range potentials, singular perturbations, Laplace operator}
\dedicatory{In memory of Boris Pavlov: brilliant mathematician and fascinating personality}

\begin{abstract}
Let $A$ and $A_{1}$ are unbounded selfadjoint operators in a Hilbert space $\mathcal{H}$. Following \cite{AK} we call $A_{1}$ a  \textit{singular} perturbation of $A$ if $A$ and $A_{1}$ have different domains $\mathcal{D}(A),\mathcal{D}(A_{1})$ but $\mathcal{D}(A)\cap\mathcal{D}(A_{1})$ is dense in $\mathcal{H}$ and $A=A_{1}$ on  $\mathcal{D}(A)\cap\mathcal{D}(A_{1})$. In this note we specify without recourse to the theory of selfadjoint  extensions of symmetric operators the conditions under which a given bounded holomorphic operator function in the open upper and lower half-planes is the resolvent of a singular perturbation $A_{1}$ of a given selfadjoint operator $A$. 

For the special case when $A$ is the standardly defined selfadjoint Laplace operator in $\mathbf{L}_{2}(\mathbf{R}_{3})$ we describe using the M.G. Krein resolvent formula a class of singular perturbations $A_{1}$, which are defined by  special selfadjoint boundary conditions on a finite or spaced apart by bounded from below distances infinite set of points in $\mathbf{R}_{3}$ and also on a bounded segment of straight line embedded into $\mathbf{R}_{3}$ by connecting parameters in the boundary conditions for $A_{1}$ and the independent on $A$  matrix or operator parameter in the Krein formula for the pair $A, A_{1}$.   
\end{abstract}

\maketitle
\section{Introduction}
The so called solvable models associated with zero-radius potentials \cite{AGHH} and more general singular perturbations has come to the foreground in the late oeuvre of Boris Pavlov. He and his numerous disciples and followers enriched these models and significantly expanded the boundaries of their applications, endowing the involved point potentials and singular perturbations with internal structures. The results pertaining to the initial stages of the relevant studies can be found in the review \cite{Pav} and subsequent monograph \cite{AK}. 
Recall that Schrödinger operators with potentials of zero radius appeared in physical applications more than 80 years ago (historical references and comments can be found in the well-known books \cite{DeOst},\cite{AGHH},\cite{AK}). However, a clear understanding of the mathematical nature of these objects was achieved much later in \cite{BF}. After the note \cite{BF} the theory of extensions of symmetric operators turned out the main tool for solving the problems of spectral theory and scattering theory for Schrödinger and afterwards for Dirac operators with potentials or analogues of potentials formally given as combinations of Dirac $\delta$-functions. 

As it was traced in \cite{AdP}, \cite{Pav} the solvability of the zero-range potential models and problems for a wide class of singular perturbations of selfadjoint operators lie in the  algebraic simplicity and universality of M.G. Krein resolvent formula for  selfadjoint perturbations of a given selfadjoint operator. It appears that to solve specific problems of spectral and scattering theory for sufficiently wide classes of perturbations of selfadjoint operators the mentioned M.G. Krein formula can be used as the only tool of analysis. 

However, despite the large number of deep and interesting mathematical results on the zero-range potential models and singular perturbations and their effective, elegant and useful physical applications obtained in subsequent years, a profound analysis of related problems with quest for analytically solvable models does not apply to interests of the majority of today's consumers of mathematical physics. Instead, they would prefer to solve  their problems using computer algebra systems and numerical calculations. This paper is an attempt to develop an available to the mass consumer lite theory of singular perturbations of seladjoint operators operating only with that resolvent formula. 
 
In auxiliary Section 2, we recall the necessary and sufficient conditions under which a function on an open set of the complex plane whose values are bounded linear operators in Hilbert space is the resolvent of densely defined closed linear operator, particularly, of selfadjoint operator. We also give here the known description of resolvents for finite-dimensional selfadjoint perturbations of a given selfadjoint operator. 

A short section 3 is devoted to the derivation of the Kerin formula for resolvents of certain classes of singular perturbations of a given selfadjoint operator. Using the approach of M.G. Krein, but not referring to the theory of extensions, we justify a well-known, in our opinion application-friendly parametrizations of this formula.

The first of two obtained version of the Krein formula is illustrated in Section 4 by the example of singular selfadjoint perturbations of the selfadjoint Laplace operator in $\mathbf{L}_{2}(\mathbf{R}_{3})$  that have form of a sum of zero-range potentials spaced apart by bounded from below distances.

The second obtained version of the Krein formula is more suitable for describing singular perturbations of the classical Laplace operator in $\mathbf{L}_{2}(\mathbf{R}_{3})$ whose action is concentrated on one- and two-dimensional manifolds of $\mathbf{R}_{3}.$ This version was illustrated in Section 5 by singular perturbation of the Laplace operator, which is located on a straight-line segment embedded into $\mathbf{R}_{3}$. The role of the parameter in the Krein formula in this case is played by the selfadjoint Sturm-Liouville operator on the given segment. The results of this section can easily be extended to the case when the singular perturbation of the Laplace operator in $\mathbf{L}_{2}(\mathbf{R}_{3})$ is given on a compact quantum graph embedded into $\mathbf{R}_{3}$. In the latter case, we obtain an extension of the proposed in \cite{PF} model for describing the interaction of molecules with the surrounding medium.   

  \section{Reminder of resolvents basic properties}
\begin{thm}\label{initial} 
Let $R(z)$ be a strongly continuous operator function on a non-empty open area $\mathbb{D}$ of complex plane and values of this function are  bounded operators in a Hilbert space $\mathcal{H}$. $R(z)$  is the resolvent of a linear densely defined closed operator $A$ in  $\mathcal{H}$ with the resolvent set $\varrho(A)\supseteq\mathbb{D}$ if and only if 
\begin{itemize}
\item{\begin{equation}\label{first}
\ker R(z)=\ker R(z)^{*}=\{0\};
\end{equation}}
\item{for any $z_{1},z_{2}\in \mathbb{D}$ the Hilbert equality 
\begin{equation}\label{hilb1}
R(z_{1})-R(z_{2})=(z_{1}-z_{2})R(z_{1})R(z_{2})
\end{equation} }
 holds. \end{itemize}

\end{thm}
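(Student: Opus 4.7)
\medskip

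\noindent\textbf{Proof plan.}

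The plan is to prove both directions. Necessity is quick: if $R(z)=(A-zI)^{-1}$ with $A$ densely defined and closed and $\mathbb{D}\subseteq\varrho(A)$, then $R(z)$ maps $\mathcal{H}$ bijectively onto $\mathcal{D}(A)$, so $\ker R(z)=\{0\}$; density of $\mathcal{D}(A)=\mathrm{ran}\,R(z)$ translates into $\ker R(z)^{*}=(\mathrm{ran}\,R(z))^{\perp}=\{0\}$; and the resolvent identity \eqref{hilb1} is the classical Hilbert identity, obtained by multiplying $(A-z_{1}I)-(A-z_{2}I)=z_{2}-z_{1}$ on both sides by $R(z_{1})$ and $R(z_{2})$.

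For sufficiency, the first step is to exhibit a candidate domain. Fix $z_{0}\in\mathbb{D}$ and set $\mathcal{D}:=\mathrm{ran}\,R(z_{0})$. Using \eqref{hilb1} in the form $R(z_{1})=R(z_{2})\bigl(I+(z_{1}-z_{2})R(z_{1})\bigr)$ I would show $\mathrm{ran}\,R(z_{1})\subseteq\mathrm{ran}\,R(z_{2})$ and, by symmetry, that $\mathcal{D}$ does not depend on the choice of $z_{0}$. Since $\ker R(z_{0})=\{0\}$, the operator $R(z_{0})^{-1}\colon\mathcal{D}\to\mathcal{H}$ is well defined, and I set
\begin{equation*}
A x := z_{0}x+R(z_{0})^{-1}x,\qquad x\in\mathcal{D}(A):=\mathcal{D}.
\end{equation*}
A short calculation with \eqref{hilb1} (writing $x=R(z_{0})y=R(z_{1})y'$ and comparing $y,y'$) shows that this definition is independent of $z_{0}$.

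Next I would verify the three structural properties. Density of $\mathcal{D}(A)$ follows from $\mathcal{D}(A)^{\perp}=(\mathrm{ran}\,R(z_{0}))^{\perp}=\ker R(z_{0})^{*}=\{0\}$, which is exactly where the second half of \eqref{first} is needed. Closedness of $A$ is automatic once we observe that $A-z_{0}I$ has the everywhere defined bounded inverse $R(z_{0})$: if $x_{n}\to x$ and $A x_{n}\to y$ in $\mathcal{H}$, then $(A-z_{0}I)x_{n}\to y-z_{0}x$, so applying $R(z_{0})$ yields $x_{n}\to R(z_{0})(y-z_{0}x)$, forcing $x\in\mathcal{D}$ and $Ax=y$. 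Finally, it remains to show $\varrho(A)\supseteq\mathbb{D}$ and $R(z)=(A-zI)^{-1}$ for every $z\in\mathbb{D}$; this is immediate for $z=z_{0}$ by construction, and the general case drops out by writing $(A-zI)R(z)$ and $R(z)(A-zI)$ and reducing each to $I$ by means of the identity \eqref{hilb1} applied with $z_{1}=z$, $z_{2}=z_{0}$.

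The only delicate point is the interplay between the two halves of \eqref{first}: the first half secures injectivity (so that $R(z_{0})^{-1}$ makes sense and $A$ is well defined), while the second half is precisely what promotes the range of $R(z_{0})$ from merely ``a linear manifold'' to a \emph{dense} domain, which is what allows us to call the resulting $A$ densely defined. Everything else is bookkeeping with the resolvent identity.
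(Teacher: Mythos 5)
Your proposal is correct and follows essentially the same route as the paper: both define $A$ on $\mathrm{ran}\,R(z_{0})$ by inverting $R(z_{0})$ (the paper's operator $A_{z}$ given by $g=R(z)f\mapsto f+zg$ is exactly your $x\mapsto z_{0}x+R(z_{0})^{-1}x$), use $\ker R(z)^{*}=\{0\}$ for density, the bounded inverse for closedness, and the Hilbert identity to show independence of $z_{0}$ and to identify $R(z)$ with $(A-zI)^{-1}$ throughout $\mathbb{D}$. No gaps.
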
 
\begin{proof} By (\ref{first}) for each $z\in\mathbb{D}$ the linear relation  
\begin{equation}\label{linear}
\left\{\begin{array}{c}
g=R(z)f,\;f\in\mathcal{H} \\
A_{z}g=f+zR(z)f=f+zg
\end{array} \right. 
\end{equation}
defines a linear operator $A_{z}$ with the dense range $R(z)\mathcal{H}$.  

If for some sequence $f_{n}\in\mathcal{H}$ the sequences $g_{n}=R(z)f_{n}$ and $A_{z}g_{n}=f_{n}+zg_{n}$ converge to vectors $g_{\infty}$ and $h_{\infty}$, respectively, then by virtue of (\ref{linear}) the sequence $f_{n}$ converges to some vector $f_{\infty}$. Since $R(z)$ is a bounded operator, then $g_{\infty}=R(z)f_{\infty}$. Therefore and $g_{\infty}$ belongs to the domain of $A_{z}$ and $$A_{z}g_{\infty}=f_{\infty}+zg_{\infty}=\underset{n\rightarrow\infty}{\lim}\left(f_{n}+zg_{n}\right)=\underset{n\rightarrow\infty}{\lim}A_{z}g_{n}=h_{\infty},$$ that is $A_{z}$ is a closed operator.

According to (\ref{hilb1}) for any $z_1,z_2\in\mathbb{D}$  we have  $R(z_1)R(z_2)=R(z_2)R(z_1)$ and $$R(z_2)\mathcal{H}=R(z_1)[I+(z_2 - z_1)R(z_2)]\mathcal{H}\subseteq R(z_1)\mathcal{H}.$$ Hence the domains of all $A_z$ coincide.  

Taking some $g=R(z_1)f=R(z_2)[I+(z_{1}-z_{2})R(z_{1})]f, \; f\in\mathcal{H},$ we obtain with account of (\ref{linear}), (\ref{hilb1}) that 
\begin{equation*}\begin{array}{c}
A_{z_2}g=  \left[I+(z_{1}-z_{2})R(z_{1})\right]f+z_2 R(z_2)\left[I+(z_{1}-z_{2})R(z_{1})\right]\\
=A_{z_1}g+z_2\left\{-R(z_1)f +R(z_2)f+ (z_{1}-z_{2})R(z_{1})R(z_2)f \right\}=A_{z_1}g.
\end{array}
\end{equation*}
 Therefore $A_z$ doesn't depend on $z$ and for the operator  $A\equiv A_z, \;z\in \mathbb{D},$ by construction   
 \begin{equation}\label{resolv1}
 (A-zI)^{-1}=\left( A_z -zI \right)^{-1} =R(z),
 \end{equation}
 which is the desired conclusion.
 
 The proof of "only if" is trivial.   
\end{proof}

\begin{thm}\label{next}
If $R(z)$ as in Theorem \ref{initial} and in addition  \begin{equation} \label{symm}
  z\in\mathbb{D} \leftrightarrow \bar{z}\in \mathbb{D},
\end{equation}
 \begin{equation}\label{selfadj}
 R(\bar{z})=R(z)^{*}, \quad z\in \mathbb{D},
\end{equation} 
 then $R(z)$ is the resolvent of a selfadjoint  operator $A$.
 \end{thm}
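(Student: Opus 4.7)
The plan is to deduce selfadjointness of $A$ directly from the adjoint-of-resolvent identity, bypassing any verification of symmetry and surjectivity separately. By Theorem \ref{initial}, hypotheses (\ref{first}) and (\ref{hilb1}) already furnish a closed densely defined operator $A$ with $R(z) = (A - zI)^{-1}$ for every $z \in \mathbb{D}$; it remains to use (\ref{symm}) and (\ref{selfadj}) to show $A = A^{*}$.

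First I would take adjoints in the identity $R(z) = (A - zI)^{-1}$. Since $(A - zI)^{-1}$ is bounded and everywhere defined, standard Hilbert-space theory yields
\begin{equation*}
R(z)^{*} = \bigl((A - zI)^{-1}\bigr)^{*} = \bigl((A - zI)^{*}\bigr)^{-1} = (A^{*} - \bar z I)^{-1},
\end{equation*}
so $R(z)^{*}$ is the resolvent of $A^{*}$ at the point $\bar z$, which by (\ref{symm}) itself lies in $\mathbb{D}$. Invoking (\ref{selfadj}) on the left then rewrites this as the single equation
\begin{equation*}
(A - \bar z I)^{-1} = R(\bar z) = R(z)^{*} = (A^{*} - \bar z I)^{-1}
\end{equation*}
valid for every $z \in \mathbb{D}$.

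Both sides are bijections from $\mathcal{H}$ onto $\mathcal{D}(A)$ and $\mathcal{D}(A^{*})$ respectively, so the ranges must coincide and the two operators $A - \bar z I$ and $A^{*} - \bar z I$ agree pointwise; hence $A^{*} = A$. The only real technical point is the adjoint-of-inverse identity $((A-zI)^{-1})^{*} = ((A-zI)^{*})^{-1}$, which requires exactly that $A$ be closed and densely defined with $(A - zI)^{-1}$ bounded and everywhere defined — precisely what Theorem \ref{initial} delivers. One should also note that a non-empty open subset of $\mathbb{C}$ cannot lie entirely on the real axis, so (\ref{symm}) genuinely supplies points $z, \bar z \in \mathbb{D}$ off the real line, making the above identifications non-vacuous and sufficient to characterize $A^{*}$.
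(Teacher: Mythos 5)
Your argument is correct, but it follows a genuinely different route from the paper's. The paper first notes that for non-real $z\in\mathbb{D}$ the construction in Theorem \ref{initial} already gives $\mathrm{ran}(A-zI)=\mathcal{H}$, and then reduces the claim to showing that $A$ is \emph{symmetric}; this it verifies by a direct computation, expanding $(Ag_{1},g_{2})-(g_{1},Ag_{2})$ for $g_{i}=R(z)f_{i}$ and using (\ref{selfadj}) together with the Hilbert identity (\ref{hilb1}) to show the difference vanishes, after which selfadjointness follows from the standard criterion for symmetric operators with full ranges at a non-real point and its conjugate. You instead identify $A^{*}$ outright through $\bigl((A-zI)^{-1}\bigr)^{*}=\bigl((A-zI)^{*}\bigr)^{-1}=(A^{*}-\bar zI)^{-1}$ and match this against $R(\bar z)=(A-\bar zI)^{-1}$, obtaining $\mathcal{D}(A)=\mathcal{D}(A^{*})$ and $A=A^{*}$ in one step. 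Your route is shorter and bypasses both the symmetry computation and the surjectivity criterion, at the cost of invoking the adjoint-of-inverse identity for unbounded operators --- which is legitimate here, since $A-zI$ is densely defined and injective with a bounded, everywhere-defined inverse, exactly as Theorem \ref{initial} guarantees. One minor remark: your closing observation that $\mathbb{D}$ must contain non-real points is not actually needed for your argument, since $(A-\bar zI)^{-1}=(A^{*}-\bar zI)^{-1}$ forces $A=A^{*}$ for any single $z\in\mathbb{D}$, real or not; it is the paper's argument, resting on the symmetric-operator criterion, that genuinely requires a non-real point of $\mathbb{D}$.
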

\begin{proof}
By (\ref{linear}) for each non-real $z\in \mathbb{D}$ the range $A_z-zI\equiv A -zI$ coincides with $\mathcal{H}$. Therefore it suffices to show that $A$ is a symmetric operator. But for any $g_{1}=R(z)f_1,\;g_{2}=R(z)f_2, \;f_1,f_2 \in \mathcal{H}, \;z\in \mathbb{D},$ by virtue of  (\ref{selfadj}) and (\ref{hilb1})
\begin{equation*}\begin{array}{c}
\left(Ag_1,g_2 \right)-\left(g_1,Ag_2 \right)=\left([f_1 +zR(z)f_1 ],R(z)f_2 \right)-\left(R(z)f_1,[f_2 +zR(z)f_2]\right)\\ 
=\left([R(\bar{z}) +zR(\bar{z})R(z)-R(z)-\bar{z}R(\bar{z})R(z)]f_1,f_2 \right)=0.
\end{array}
\end{equation*}
\end{proof} 

\begin{thm}\label{simplvers}
Let $A$ be a selfadjoint operator in $\mathcal{H}$;  $R(z),\; \mathrm{Im}z\neq 0,$ is the resolvent of $A$; $f_{1},...,f_{N}< \: 1\leq N \leq \infty,$ are linearly independent vectors from $\mathcal{H}$; $Q(z)$ is the Nevanlinna $N\times N$-matrix function with elements \begin{equation}\label{gramm}
q_{mn}(z)=\left(R(z)f_{n},f_{m} \right), \;1\leq m,n \leq N. 
\end{equation}
Then for any invertible Hermitian $N\times N$ matrix $W=\left(w_{mn} \right)_{1}^{N}$ the matrix $Q(z)+W, \:\mathrm{Im}z\neq 0, $ is invertible and the operator function
\begin{equation}\label{krein1}
R_{1}(z)=R(z)-\sum\limits_{m,n=1}^{N}\left(\left[Q(z)+W \right]^{-1} \right)_{mn} \left(\cdot,R(\bar{z})f_n \right)R(z)f_m 
\end{equation}
is the resolvent of some selfadjoint operator $A_{1}$. 
\end{thm}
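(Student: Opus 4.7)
The plan is to apply Theorems \ref{initial} and \ref{next} to the candidate resolvent $R_1(z)$. To keep formulas compact, I would introduce the column operator $F\colon\mathbb{C}^N\to\mathcal{H}$, $Fc=\sum_m c_m f_m$; since $R(\bar z)=R(z)^*$ one has $Q(z)=F^*R(z)F$, and the proposed formula rewrites as
\[
R_1(z)=R(z)-R(z)F\,M(z)\,F^*R(z),\qquad M(z):=[Q(z)+W]^{-1}.
\]
Before anything else one must check that $M(z)$ is defined: the Hilbert identity for $R$ together with $R(\bar z)=R(z)^*$ gives $Q(z)-Q(z)^*=(z-\bar z)(R(\bar z)F)^*R(\bar z)F$, and since $R(\bar z)$ is injective and $f_1,\dots,f_N$ are linearly independent, the matrix $(R(\bar z)F)^*R(\bar z)F$ is strictly positive; adding the Hermitian $W$ does not disturb the imaginary part, so $Q(z)+W$ is invertible.

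Next I would verify condition (\ref{selfadj}): this is immediate upon taking adjoints in the displayed formula, using $W=W^*$ and $Q(\bar z)=Q(z)^*$. For triviality of $\ker R_1(z)$, suppose $R_1(z)g=0$ and set $c:=M(z)F^*R(z)g$; the equation becomes $R(z)(g-Fc)=0$, and injectivity of $R(z)$ (cf.\ Theorem \ref{initial}) forces $g=Fc$. Applying $F^*R(z)$ yields $Q(z)c=(Q(z)+W)c$, i.e.\ $Wc=0$; invertibility of $W$---this is precisely where that hypothesis enters---then forces $c=0$ and $g=0$. The same argument at $\bar z$ takes care of $\ker R_1(z)^*=\ker R_1(\bar z)$.

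The main obstacle is the Hilbert identity (\ref{hilb1}) for $R_1$. Setting $B(z):=R(z)-R_1(z)=R(z)FM(z)F^*R(z)$, a telescoping expansion using $R(z_1)-R(z_2)=(z_1-z_2)R(z_1)R(z_2)$ gives
\[
B(z_1)-B(z_2)=(z_1-z_2)\bigl[R(z_1)B(z_2)+B(z_1)R(z_2)\bigr]+R(z_1)F\bigl[M(z_1)-M(z_2)\bigr]F^*R(z_2),
\]
so the target identity reduces, after cancellation, to the claim
\[
R(z_1)F[M(z_1)-M(z_2)]F^*R(z_2)=-(z_1-z_2)\,B(z_1)B(z_2).
\]
This in turn follows from $F^*R(z_1)R(z_2)F=(z_1-z_2)^{-1}(Q(z_1)-Q(z_2))$ (the Hilbert identity for $R$ sandwiched by $F^*,F$) combined with the matrix resolvent identity $M(z_1)-M(z_2)=-M(z_1)(Q(z_1)-Q(z_2))M(z_2)$; these two dovetail exactly. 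With all hypotheses of Theorems \ref{initial} and \ref{next} in place, $R_1$ is the resolvent of a selfadjoint operator $A_1$. An elegant shortcut would be to recognize the formula as the Woodbury inverse $R_1(z)=(A+FW^{-1}F^*-zI)^{-1}$, making the identity automatic, but this partially bypasses the \emph{reverse} spirit of the theorem.
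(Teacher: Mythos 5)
Your proposal is correct and follows essentially the same route as the paper: invertibility of $Q(z)+W$ from the strict positivity of the Gram matrix $(z-\bar z)^{-1}[Q(z)-Q(z)^*]$, triviality of $\ker R_1(z)$ reduced via injectivity of $R(z)$ to $\ker W=\{0\}$, the symmetry $R_1(\bar z)=R_1(z)^*$, and the Hilbert identity, all feeding into Theorems \ref{initial} and \ref{next}. The only difference is presentational --- your block-operator notation $F$, $M(z)$ compresses the index computations, and you actually carry out the Hilbert-identity verification (via $M(z_1)-M(z_2)=-M(z_1)[Q(z_1)-Q(z_2)]M(z_2)$) that the paper dismisses as ``elementary algebraic manipulations.''
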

\begin{proof}
By our assumptions $Q(z)$ as well as $Q(z)+W$ are Nevanlinna matrix functions the imaginary parts of which $$\frac{1}{2i}\left[ Q(z)-Q(z)^{*}\right] $$ has property
\begin{equation*}
\frac{1}{z-\bar{z}}\left[ Q(z)-Q(z)^{*}\right] =\left((R(z)f_{m},R(z)f_{n}) \right)_{m,n=1}^{N}  
\end{equation*}
\begin{equation*}
=\Gamma(R(z)f_{1},...,R(z)f_{N})   \geq \lambda_{\min}(R(z)f_{1},...,R(z)f_{N})I ,
\end{equation*}
where $\Gamma(R(z)f_{1},...,R(z)f_{N})$ is the Gramm-Schmidt matrix for vectors $R(z)f_{1},...R(z)f_{N}$ and $\lambda_{\min}(R(z)f_{1},...,R(z)f_{N})$ is the minimal eigenvalue of $\Gamma(R(z)f_{1},...,R(z)f_{N})$. Since vectors $f_{1},...f_{N}$ are linerly independent and $\ker{R(z)}=\{0\}$, then $\lambda_{\min}(R(z)f_{1},...,R(z)f_{N})>0.$ Hence $Q(z)+W$ is invertible.

Suppose that $R_{1}(z)h=0,\:\mathrm{Im}z\neq 0$ for some $h\in \mathcal{H}$, that is
\begin{equation}\label{aux1}
R(z)h=\sum\limits_{m,n=1}^{N}\left(\left[Q(z)+W \right]^{-1} \right)_{mn} \left(h,R(\bar{z})f_n \right)R(z)f_m.
\end{equation}Hence $R(z)h$ is a linear combination of vectors $R(z)f_{1},...,R(z)f_{N}$ and in view of invertibility of $R(z)$ we see that $h=\alpha_{1}f_{1}+...+\alpha_{N}f_{N}$ with some coefficients $\alpha_{1},..,\alpha_{N}$. By (\ref{krein1}) 
\begin{equation*}
R_{1}(z)f_{j}=\sum\limits_{m=1}^{N}\left(\left[Q(z)+W \right]^{-1}W \right)_{mj} R(z)f_{m}, \quad j=1,...,N.
\end{equation*}
Therefore
\begin{equation}\begin{array}{c}\label{coeff1}
0=R_{1}(z)h=R(z)\left( \beta_{1}f_{1}+...+\beta_{N}f_{N}\right), \\
 \beta_{m}=\sum\limits_{m=1}^{N}\left(\left[Q(z)+W \right]^{-1}W \right)_{mj}\alpha_{j}.
\end{array}
\end{equation} 
Since $\ker R(z)=\{0\}$ and $f_{1},...f_{N}$ are linearly independent, then $\beta_{1}=...=\beta_{N}=0$. But if $W$ is invertible then by by virtue of invertibility of $Q(z)+W$ and (\ref{coeff1}) $\alpha_{1}=...=\alpha_{N}=0$, that is $h=0$. Hence $\ker R_{1}(z)=0$.

$R_{1}(z)^{*}=R_{1}(\bar{z})$ follows directly from (\ref{krein1}) because $R(z)^{*}=R(\bar{z}),\:  Q(z)^{*}=Q(\bar{z}),\:W^{*}=W.$ 

Taking into account that for $R(z)$ the Hilbert identity holds and that
\begin{equation*}\begin{array}{c}
q_{mn}(z_{2})-q_{mn}(z_{1})=\left[q_{mn}(z_{2})+w_{mn}\right]-\left[q_{mn}(z_{1})+w_{mn}\right]\\ =(z_{2}-z_{1})\left(R(z_{1})f_{n},R(\bar{z}_{2})f_{m}) \right), \quad 1\leq m,n\leq N, \quad \mathrm{Im}z_{1},z_{2}\neq 0, 
\end{array}
\end{equation*}  
one can easily verify by elementary algebraic manipulations that for $R_{1}(z)$ the Hilbert identity also holds.

We see that $R_{1}(z)$ satisfies all conditions of Theorem \ref{initial} and \ref{next}, which is the desired conclusion.
\end{proof}

\begin{rem}
Comparing the formal inverse for operators in the left and right parts of (\ref{krein1}) yields
\begin{equation}\label{triv1}
A_{1}=A+\sum\limits_{m,n=1}^{N}\left(W^{-1} \right)_{mn} \left(\cdot ,f_{n} \right)f_{m}\, , 
\end{equation}
that is if $N<\infty$, then $A_{1}$ is a finite dimensional perturbation of $A$. 
\end{rem}

\begin{rem}
Let $W$ in (\ref{krein1}) isn't invertible and 
\begin{equation*}
\mathcal{A} =\left\lbrace h\in\mathcal{H}: h=\alpha_{1}f_{1} +...+\alpha_{N}f_{N}, \;(\alpha_{1},...,\alpha_{N})^{T}\in 
\ker W \right\rbrace .
\end{equation*}
Then $R_{1}(z)h\equiv 0$ for any $h\in \mathcal{A}$ but in this case the restriction of $R_{1}(z)$ on the subspace 
$\mathcal{A}^{\bot}=\mathcal{H}\ominus\mathcal{A}$ is the resolvent of selfadjoint operator $A_{1}$ in 
$\mathcal{A}^{\bot}$.

Indeed, in the course of proof of Theorem \ref{initial} it was actually shown that $\ker R_{1}(z)=\mathcal{A}$. Since 
$R_{1}(z)^{*} =R_{1}(\bar{z})$, then $R_{1}(z)\mathcal{A}^{\bot} \subseteq \mathcal{A}^{\bot}$ and for the restriction 
of $R_{1}(z)$ on the invariant subspace $\mathcal{A}^{\bot}$ all conditions of Theorems \ref{initial} and \ref{next} 
hold.

Specifically, if $W=0$ in (\ref{krein1}) and $A$ is a bounded operator, then 
$P_{\mathcal{A}^{\bot}}R_{1}(z)|_{\mathcal{A}^{\bot}}$ where $P_{\mathcal{A}^{\bot}}$ is the orthogonal projector on 
$\mathcal{A}^{\bot}$ is the resolvent of selfadjoint operator $P_{\mathcal{A}^{\bot}}A|_{\mathcal{A}^{\bot}}$ in 
$\mathcal{A}^{\bot}$.
\end{rem}  
\section{M.G. Krein's line of argument}

M.G. Krein was the first who realized that the statement of Theorem \ref{simplvers} can be strengthened in the following way.
\begin{thm}\label{kreinmain}
Let $A$ be an unbounded selfadjoint operator in $\mathcal{H}$ and  $R(z),\; \mathrm{Im}z\neq 0,$ is the resolvent of $A$;  $\{g_{n}(z)\}_{n=1}^{N}, \: 1\leq N \leq \infty,$  is the set of $\mathcal{H}$-valued holomorphic in the open upper and lower half-planes vector functions satisfying the conditions
\begin{itemize}
\item{for any non-real $z,z_{0}$
\begin{equation}\label{res3}
g_{n}(z)=g_{n}(z_{0})+(z-z_{0})R(z)g_{n}(z_{0}), \; j=1,...,N; 
\end{equation}
}
\item{at least for one non-real $z_{0}$ vectors $\{g_{n}(z_{0})\}_{n=1}^{N}$ form a basis $($ Riesz basis if $N=\infty$ $)$  in their $($closed if $N=\infty$$)$ linear span $\mathcal{N}$ and none of non-zero vectors from $\mathcal{N}$ belongs to the domain $\mathcal{D}(A)$ of $A$};
\end{itemize}
$Q(z)$ is a holomorphic in the open upper and lower half-planes $N\times N$-matrix function $($that generates a bounded operator in the space $\mathfrak{l}_{2}$ if $N=\infty$ $)$ such that 
\begin{itemize}
\item{$Q(z)^{*}=Q(\bar{z}), \; \mathrm{z}\neq 0$;}
\item{for any non-real $z,z_{0}$
\begin{equation}
Q(z)-Q(z_{0})=(z-z_{0})\left( \left(g_{m}(z),g_{n}(\bar{z_{0}})\right)\right)^{T}_{1\leq m,n\leq N}. 
\end{equation}
}
\end{itemize} 
Then for any Hermitian $N\times N$ matrix $W=\left(w_{mn} \right)_{1}^{N}$ $($  such that the closure of  linear operator defined as multiplication by $W$ on a set of $\mathfrak{l}_{2}$-vectors with a finite number of non-zero coordinates is a selfadjoint operator in $\mathfrak{l}_{2}$ if $N=\infty$ $)$ the matrix (operator in $\mathfrak{l}_{2}$ if $N=\infty$ $)$ $Q(z)+W, \:\mathrm{Im}z\neq 0, $ is $($boundedly$)$ invertible and the operator function
\begin{equation}\label{krein4}
R_{1}(z)=R(z)-\sum\limits_{m,n=1}^{N}\left(\left[Q(z)+W \right]^{-1} \right)_{mn} \left(\cdot,g_{n}(\bar{z}) \right)g_{m}(z)
\end{equation}
is the resolvent of some selfadjoint operator $A_{1}$. 
\end{thm}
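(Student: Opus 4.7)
The plan is to verify the hypotheses of Theorems \ref{initial} and \ref{next} for the operator function $R_1(z)$ defined by \eqref{krein4}. Four items require attention: (a) $Q(z)+W$ is boundedly invertible on $\mathbb{C}\setminus\mathbb{R}$, so that \eqref{krein4} is meaningful; (b) $\ker R_1(z)=\{0\}$; (c) $R_1(\bar z)=R_1(z)^*$; (d) $R_1$ obeys the Hilbert resolvent identity. Symmetry (c) is immediate from $R(\bar z)=R(z)^*$, $Q(\bar z)=Q(z)^*$, and $W^*=W$, so the real work lies in (a), (b), and (d).

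For (a) I would set $z_0=\bar z$ in the difference formula for $Q$, obtaining $\mathrm{Im}\,Q(z)=\mathrm{Im}(z)\bigl((g_n(z),g_m(z))\bigr)_{m,n}$. The operator $I+(z-z_0)R(z)$ is a bounded bijection of $\mathcal{H}$ with inverse $I-(z-z_0)R(z_0)$ (a one-line check using the Hilbert identity for $R$), so the relation $g_m(z)=[I+(z-z_0)R(z)]g_m(z_0)$ shows that $\{g_m(z)\}$ is again a Riesz basis of its closed span $\mathcal{N}(z)$. Hence the Gram matrix is uniformly positive on $\ell_2$ and $\mathrm{Im}\,Q(z)\geq\delta(z)I$ with $\delta(z)>0$; coupled with the selfadjointness of $W$, the estimate $\|T\xi\|\,\|\xi\|\geq|\mathrm{Im}(T\xi,\xi)|$ applied both to $T=Q(z)+W$ and to its adjoint yields bounded invertibility. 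For (b) the key structural observation is that $\mathcal{N}(z)\cap\mathcal{D}(A)=\{0\}$: if $\phi=\sum c_m g_m(z)\in\mathcal{D}(A)$, then $\psi:=\sum c_m g_m(z_0)\in\mathcal{N}$ satisfies $\phi=\psi+(z-z_0)R(z)\psi$, so $\psi=\phi-(z-z_0)R(z)\psi$ lies in $\mathcal{D}(A)\cap\mathcal{N}=\{0\}$ by the hypothesis on $\mathcal{N}$, whence $\phi=0$. Then $R_1(z)h=0$ forces $R(z)h\in\mathcal{N}(z)\cap R(z)\mathcal{H}=\mathcal{N}(z)\cap\mathcal{D}(A)=\{0\}$, so $h=0$ because $\ker R(z)=\{0\}$.

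For (d) I would insert \eqref{krein4} into both sides of the target identity and reduce using three ingredients: the Hilbert identity for $R$; the vector identity $g_m(z_1)-g_m(z_2)=(z_1-z_2)R(z_1)g_m(z_2)$ obtained by subtracting two instances of \eqref{res3}; and the matrix resolvent identity
\[
[Q(z_1)+W]^{-1}-[Q(z_2)+W]^{-1}=-(z_1-z_2)[Q(z_1)+W]^{-1}\,G(z_1,z_2)\,[Q(z_2)+W]^{-1},
\]
with $G(z_1,z_2)_{mn}=(g_n(z_2),g_m(\bar z_1))$, coming from the difference formula for $Q$. The four cross terms in the expansion of $(z_1-z_2)R_1(z_1)R_1(z_2)$ then collapse pairwise exactly as in the finite-rank computation sketched in the proof of Theorem \ref{simplvers}.

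The principal obstacle is the infinite-dimensional bookkeeping when $N=\infty$: the sums in \eqref{krein4} must converge as bounded operators on $\mathcal{H}$, and the matrix identities must be interpreted as statements about bounded operators in $\ell_2$. The Riesz basis hypothesis is decisive here, as it provides the two-sided norm equivalence that makes $h\mapsto((h,g_n(\bar z)))_n$ a bounded $\mathcal{H}\to\ell_2$ map and $(c_m)\mapsto\sum_m c_m g_m(z)$ a bounded $\ell_2\to\mathcal{H}$ map. Together with the bounded invertibility of $Q(z)+W$ established in (a), this makes \eqref{krein4} a well-defined bounded operator on $\mathcal{H}$ and permits the algebraic manipulations of (d) to be carried out term by term without convergence issues.
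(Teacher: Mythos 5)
Your proposal is correct and follows essentially the same route as the paper: verify the hypotheses of Theorems \ref{initial} and \ref{next}, obtain invertibility of $Q(z)+W$ from the lower bound on $\mathrm{Im}\,Q(z)$ supplied by the Gram operator of the Riesz basis $\{g_n(z)\}$ (itself propagated from $z_0$ to all non-real $z$ via the bounded bijection $I+(z-z_0)R(z)$), and derive $\ker R_1(z)=\{0\}$ from $\mathcal{D}(A)\cap\mathcal{N}=\{0\}$. You in fact supply two details the paper leaves implicit --- the argument that $\mathcal{N}(z)\cap\mathcal{D}(A)=\{0\}$ for every non-real $z$, not just the $z_0$ of the hypothesis, and the explicit resolvent-type identity for $[Q(z)+W]^{-1}$ underlying the Hilbert identity --- which strengthens rather than alters the paper's argument.
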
  

\begin{proof}
If $\{g_{n}(z_{0})\}_{n=1}^{N}$ is a (Riesz) basis in  $\mathcal{N}$ for some non-real $z_{0}$, then $\{g_{n}(z)\}_{n=1}^{N}$ is a (Riesz) basis in  $\mathcal{N}$ for any non-real $z$. Indeed, by (\ref{res3}) 
\begin{equation}\label{res4}
g_{n}(z)=U_{z_{0}}(z)g_{n}(z_{0}), \quad U_{z_{0}}(z)=\left(A-z_{0}I \right)\cdot\left(A-zI \right)^{-1},
\end{equation}
and for any non-real $z,z_{0}$ the operator $ U_{z_{0}}(z)$ is bounded and boundedly invertible.

The invertibility of $Q(z)+W$ can be expressis verbis proved as in Theorem \ref{simplvers} if one remembers that in the limit case $N=\infty$ for any Riesz basis, in particular for $\{g_{n}(z)\}_{n=1}^{\infty}, \: \mathrm{Im}\neq 0, $ the  corresponding infinite Gramm-Schmidt matrix generates a bounded, positive and boundedly invertible operator in $\mathfrak{l}_{2}$ ( see, for example, \cite{GoKr}).

Suppose that there is a vector $h\in \mathcal{H}$  such that $R_{1}(z_{0})h=0$ for some non-real $z_{0}$. By (\ref{krein4}) this means that
\begin{equation}\label{supp1}\begin{array}{c}
R\left(z_{0}\right)h\left(\in\mathcal{D}(A)\right)=\sum\limits_{m,n=1}^{N}\left(\left[Q(z)+W \right]^{-1} \right)_{mn} \left(h,g_{n}(\bar{z_{0}}) \right)g_{m}(z_{0})\\
=\sum\limits_{m=1}^{N}\left\{\sum\limits_{n=1}^{N}\left(\left[Q(z)+W \right]^{-1} \right)_{mn} \left(h,g_{n}(\bar{z_{0}}) \right)\right\}g_{m}(z_{0}) \in\mathcal{N}.  
\end{array}
\end{equation}
But for any $h\in\mathcal{H}$ the vector in the left hand side of (\ref{supp1}) belongs to $\mathcal{D}(A)$ while the corresponding vector in the right hand side of (\ref{supp1}) belongs to $\mathcal{N}$. However, by our assumptions $\mathcal{D}(A)\cap \mathcal{N}=\{0\}$. Hence both sides of (\ref{supp1}) are zero-vectors, particularly $R\left(z_{0}\right)h=0$. Recalling that  the resolvent  $R\left(z_{0}\right)$ of selfadjoint operator $A$ is invertible, we conclude that $h=0$. 

The property $R_{1}(z)^{*}=R_{1}(\bar{z}),\, \mathrm{Im}z\neq 0$ is evident.

The fact that $R_{1}(z)$ satisfies the Hilbert identity for any two non-real $z_{1},z_{2}$ can be checked out by elementary algebraic computation.

\end{proof}
The following theorem extends the class of singular perturbations of selfadjoint operators.
\begin{thm}\label{kreinnext}
Let $\mathcal{H}$ and $\mathcal{K}$ be Hilbert spaces, $A$ be an unbounded selfadjoint operator in $\mathcal{H}$ and  $R(z),\; \mathrm{Im}z\neq 0,$ is the resolvent of $A$, $G(z)$  is a bounded holomorphic in the open upper and lower half-planes operator function from $\mathcal{K}$ to  $\mathcal{H}$ satisfying the conditions 
\begin{itemize}
\item{for any non-real $z,z_{0}$
\begin{equation}\label{res3}
G(z)=G(z_{0})+(z-z_{0})R(z)G(z_{0}), 
\end{equation}
}
\item{at least for one and hence for all non-real $z$ zero is not an eigenvalue of the operator  $G(z)^{*}G(z)$ and the intersection of the domain $\mathcal{D}(A)$ of $A$ and the subspace $\mathcal{N}=\overline{G(z_{0})\mathcal{K}}\subset\mathcal{H}$ consists only of the zero-vector;}
\end{itemize}
$Q(z)$ is a holomorphic in the open upper and lower half-planes operator function in $\mathcal{K}$ such that 
\begin{itemize}
\item{$Q(z)^{*}=Q(\bar{z}), \; \mathrm{z}\neq 0$;}
\item{for any non-real $z,z_{0}$
\begin{equation}
Q(z)-Q(z_{0})=(z-z_{0})G(\bar{z_{0}})^{*}G(z). 
\end{equation}
}
\end{itemize} 
Then for any invertible selfadjoint operator $L$ in $\mathcal{K}$ such that $L^{-1}$ is compact the operator $Q(z)+L, \:\mathrm{Im}z\neq 0, $ is invertible, has compact inverse and the operator function
\begin{equation}\label{krein5}
R_{L}(z)=R(z)-G(z)\left[Q(z)+L \right]^{-1} G(\bar{z})^{*}
\end{equation}
is the resolvent of some selfadjoint operator $A_{1}$. 
\end{thm}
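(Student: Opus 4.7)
My plan is to verify for $R_L(z)$ the hypotheses of Theorems \ref{initial} and \ref{next}, which will deliver that $R_L$ is the resolvent of a selfadjoint operator $A_1$; the argument follows the scheme of Theorem \ref{kreinmain} while accommodating the operator-valued parameter $L$. The four checks to perform are: invertibility of $Q(z)+L$ with compact inverse, triviality of $\ker R_L(z)$, the adjoint relation $R_L(\bar z)=R_L(z)^*$, and the Hilbert identity.

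For the invertibility claim, substituting $z_0=\bar z$ in the hypothesis $Q(z)-Q(z_0)=(z-z_0)G(\bar{z_0})^*G(z)$ and using $Q(\bar z)=Q(z)^*$ yields $\mathrm{Im}\,Q(z)=(\mathrm{Im}\,z)\,G(z)^*G(z)$. Since $L$ is selfadjoint and $\ker G(z)^*G(z)=\{0\}$, any $\xi$ with $(Q(z)+L)\xi=0$ satisfies $(\mathrm{Im}\,z)\|G(z)\xi\|^2=0$, forcing $\xi=0$; factoring $Q(z)+L=L(I+L^{-1}Q(z))$ with $L^{-1}Q(z)$ compact, the Fredholm alternative then gives bounded invertibility of $I+L^{-1}Q(z)$, so that $(Q(z)+L)^{-1}=(I+L^{-1}Q(z))^{-1}L^{-1}$ exists on all of $\mathcal{K}$ and is compact. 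For injectivity of $R_L(z)$, if $R_L(z_0)h=0$ then the defining formula rewrites as $R(z_0)h=G(z_0)\xi$ with $\xi=[Q(z_0)+L]^{-1}G(\bar{z_0})^*h$; the left side lies in $\mathcal{D}(A)$ and the right in $\mathcal{N}$, so by $\mathcal{D}(A)\cap\mathcal{N}=\{0\}$ and injectivity of $R(z_0)$ one obtains $h=0$. The symmetry $R_L(\bar z)=R_L(z)^*$ follows by inspection from $R(\bar z)=R(z)^*$ and $Q(\bar z)=Q(z)^*$.

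The main computational step is the Hilbert identity $R_L(z_1)-R_L(z_2)=(z_1-z_2)R_L(z_1)R_L(z_2)$. I would cancel the piece $(z_1-z_2)R(z_1)R(z_2)$ using the Hilbert identity for $R$, then rewrite the three cross terms via the auxiliary identities $(z_1-z_2)R(z_1)G(z_2)=G(z_1)-G(z_2)$, $(z_1-z_2)G(\bar z_1)^*R(z_2)=G(\bar z_1)^*-G(\bar z_2)^*$, and $(z_1-z_2)G(\bar z_1)^*G(z_2)=Q(z_1)-Q(z_2)$, each an immediate consequence of the hypotheses on $G$ and $Q$. After collection the surviving equality reduces to the standard second-resolvent identity $[Q(z_1)+L]^{-1}-[Q(z_2)+L]^{-1}=[Q(z_1)+L]^{-1}(Q(z_2)-Q(z_1))[Q(z_2)+L]^{-1}$, which is automatic. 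I anticipate the principal obstacle to be the first step: since $L$ is generally unbounded, one must take care to confirm that the formal factorization indeed produces a two-sided bounded inverse mapping $\mathcal{K}$ into $\mathcal{D}(L)$ before invoking the Fredholm alternative; the remaining manipulations parallel those of Theorems \ref{simplvers} and \ref{kreinmain}.
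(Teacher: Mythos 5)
Your proposal is correct and follows essentially the same route as the paper's proof: triviality of $\ker\bigl(Q(z)+L\bigr)$ via the identity $\mathrm{Im}\,Q(z)=(\mathrm{Im}\,z)\,G(z)^{*}G(z)$, the factorization $Q(z)+L=L\bigl[I+L^{-1}Q(z)\bigr]$ combined with the Fredholm alternative for the compact operator $L^{-1}Q(z)$, and then the kernel, adjoint and Hilbert-identity checks carried over from Theorems \ref{initial}, \ref{next} and \ref{kreinmain}. You actually make explicit the algebra of the Hilbert identity and the imaginary-part computation that the paper only gestures at ("proved by the same arguments as above"), so no gap remains.
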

\begin{proof}\label{wider}
Suppose that for some non-real $z_{0}$ zero is not an eigenvalue of $G(z_{0})^{*}G(z_{0})$ and at the same time  there are a non-real $z_{1}$ and and a non-zero $h\in\mathcal{K}$ such that $G(z_{1})h=0$. Then by (\ref{res3}) $$G(z_{0})h=\left[I+(z_{0}-z_{1})R(z_{0})\right]R(z_{1})h=0,$$
a contradiction.

By our assumptions for any non-real $z$ zero is not an eigenvalue of operator $Q(z)+L$. Indeed, suppose that for some $h$ from the domain of $L$ we have $\left[Q(z)+L \right]h=0$. Then $$0=\mathrm{Im}\left(\left[Q(z)+L \right]h,h \right)=\mathrm{Im}\cdot\left(Q(z)^{*} Q(z)h,h\right).$$
But $Q(z)^{*} Q(z)$ is a non-negative invertible operator. Hence $h=0$. 

Since for non-real $z$ zero is not an eigenvalue of the operator $Q(z)+L$, by virtue of the invertibility of the operator $L$, compactnes of $L^{-1}$ and the obvious equality
$$Q(z)+L=L\left[L^{-1}Q(z)+I \right] $$ "-1" is not an eigenvalue of operator $L^{-1}Q(z)$. But $L^{-1}Q(z)$ is a compact operator. Therefore the operator  $L^{-1}Q(z)+I$ is boundedly invertible \cite{Kat} and so is the operator $Q(z)+L$, $$\left[ Q(z)+L\right]^{-1}=\left[L^{-1}Q(z)+I \right]^{-1} L^{-1} $$  Evidently, the inverse of $Q(z)+L$ is a compact operator.

The fact that $R_{1}(z)$ is the resolvent of a self-adjoint operator is proved by the same arguments as above. 
\end{proof}

\section{Singular perturbations of selfadjoint Laplace operator. Null-range potentials}
Let $A$ be an unbounded selfadjoint operator. By a \textit{regular} perturbation of $A$ we call any selfadjoint operator $A_{1}$ defined as in Theorem \ref{simplvers} . Following \cite{AK} we say that $A_{1}$ is a \textit{singular} perturbation of $A$ if $A_{1}$ is defined by $A$ as in Theorem \ref{kreinmain} or in Theorem \ref{kreinnext}.
In this Section we will consider a special class of singular perturbations of the selfadjoint Laplace operator  $$-\Delta=-\frac{\partial^{2}}{\partial x^{2}_{1}}-\frac{\partial^{2}}{\partial x^{2}_{2}}-\frac{\partial^{2}}{\partial x^{2}_{3}} $$  in $\mathbf{L}_{2}(\mathbf{R}_{3})$  defined on the Sobolev subspaces ${H}_{2}^{2}\left( \mathbf{R}_{3}\right)$, namely, the class of operators which fit into the conditions of Theorem \ref{kreinmain}. We will use here the symbol $A$ to denote the specified unperturbed Laplace operator and the symbol $R(z)$ to denote the resolvent of $A$. Remind that  
\begin{equation}\label{lapl}
 \left(R(z)f\right)(\mathbf{x})=\frac{1}{4\pi}
  \int_{\mathbf{R}_{3}}\frac{e^{i\sqrt{z}|\mathbf{x}-\mathbf{x}^\prime|}}{|\mathbf{x}-\mathbf{x}^\prime|}f\left(\mathbf{x}^\prime \right)d\mathbf{x}^\prime , \quad \mathrm{Im}{\sqrt{z}}>0, \quad \mathbf{x}=\left(x_{1},x_{2},x_{3} \right) .
 \end{equation}
 A simple but fundamentally important example of singular perturbation of $A$ was first rigorously examined in the short note \cite{BF} in the framework of the theory of self-adjoint extensions of symmetric operators. Actually, it was proved in \cite{BF} that  for   $$g(z;\mathbf{x})=\left(R(z)\delta\right)(\mathbf{x})=\frac{1}{4\pi}\frac{e^{i\sqrt{z}|\mathbf{x}|}}{|\mathbf{x}|},$$ where  $\delta\left( \mathbf{x}\right)$ is the Dirac $\delta$-function, 
 and  for any real $\alpha$ the operator function 
\begin{equation}\label{examp1}
R_{\alpha}(z)=R(z)-\frac{1}{Q(z)+\alpha}\left(\cdot,g(\bar{z};\cdot) \right)g(z;\cdot), \quad Q(z)=\frac{i\sqrt{z}}{4\pi},
\end{equation}  
is the resolvent of selfadjoint operator $A_{\alpha}$. In accordance with (\ref{examp1})   the domain $\mathcal{D}_{\alpha}$ of $A_{\alpha}$ consists of functions 
\begin{equation}\label{alpha1} 
f(\mathbf{x})=f_{0}(\mathbf{x})-\frac{1}{Q(z)+\alpha}\cdot f_{0}(\mathbf{0})\cdot g(z;\mathbf{x}),
\end{equation} 
where functions $f_{0}(\mathbf{x})$ run the space ${H}_{2}^{2}$ and 
\begin{equation}\label{alpha2} 
\left( A_{\alpha}f\right) (\mathbf{x})=\left( Af_{0}\right)(\mathbf{x})-\frac{z}{Q(z)+\alpha}\cdot f_{0}(\mathbf{0})\cdot g(z;\mathbf{x}).
\end{equation}
The expressions (\ref{alpha1}),(\ref{alpha2}) are correct since any vector $\hat{f}_{0}$ of ${H}_{2}^{2}\left( \mathbf{R}_{3}\right)$ is equivalent to some H\"{o}lder continuous function $f_{0}(\mathbf{x})$ with any index $\gamma <\dfrac{1}{2}$ \cite{Kat} and consequently the product $|\mathbf{x}|\cdot f_{0}(\mathbf{x})$ is differentiable in $|\mathbf{x}|$ at $\mathbf{x}=\mathbf{0}$ and 
\begin{equation}\label{different}
\underset {|\mathbf{x}|\downarrow 0}{\lim} \frac {\partial} {\partial|\mathbf{x}|}\left(|\mathbf{x}|f(\mathbf{x})\right)=f_{0}(\mathbf{0}). 
\end{equation}
With account of (\ref{alpha1}) and (\ref{different}) it can be argued that  functions $f(\mathbf{x})$ from
$\mathcal{D}_{\alpha}$ satisfy the "boundary condition" 
\begin{equation}\label{simplest}
\underset {|\mathbf{x}|\downarrow 0}{\lim} \left[\frac {\partial} {\partial|\mathbf{x}|}\left(|\mathbf{x}|f(\mathbf{x})\right)+\alpha |\mathbf{x}|f(\mathbf{x}) \right]=0. 
\end{equation} 

For real $\alpha$ the selfadjoint operator $A_{\alpha}$  legalizes the formal expression $-\Delta+\alpha\cdot \delta(\mathbf{x}) $ and 
associated with $A_{\alpha}$ the condition (\ref{simplest}) is said to be a null-range potential \cite{BF}, \cite{AGHH}.
 
 Note that the $g(z;\cdot)$ in (\ref{examp1}) doesn't belong to $\mathcal{D}(A)$, otherwise the functional
 \begin{equation}\label{smfin}
 \varphi(0)=
  - \int_{\mathbf{R}_{3}}\left[ \left( \Delta\varphi\right) \left(\mathbf{x} \right)+z\varphi\left(\mathbf{x} \right)\right]\cdot\bar{g(z;\mathbf{x})} d\mathbf{x} 
 \end{equation}
 would be bounded in the set of infinitely smooth compact function $\varphi\left(\mathbf{x} \right)$. Besides,
 \[
  (z-z_{0}) \left(g(z),g(\bar{z_{0}})\right)=\underset{|\mathbf{x}|\downarrow 0}{\lim} \left[g(z;\mathbf{x})-g(z_{0};\mathbf{x})\right] =Q(z)-Q(z_{0}).
  \]
 Therefore the adduced result from \cite{BF} is a special case of Theorem \ref{kreinmain}, where $A$ is the standardly defined Laplace operator and $N=1$. 

Referring to the conditions of Theorem (\ref{kreinmain}), it is easy to check that the stated assertion about $R_{\alpha}(z)$ admits the following (in fact, well-known \cite{AGHH}, \cite{AK}) generalization.
 Let 
 \begin{equation}\label{genW}\begin{array}{c}
    g_{n}(z;\mathbf{x})= R(z)\delta(\cdot -\mathbf{x}_{n})(\mathbf{x})=\frac{1}{4\pi}\frac{e^{i\sqrt{z}|\mathbf{x}-\mathbf{x}_{n}|}}{|\mathbf{x}-\mathbf{x}_{n}|}, \quad 1<n\leq N<\infty;\\
  Q(z)=\left(q_{mn}(z) \right)_{m,n=1}^{N}=\left\lbrace \begin{array}{c} 
  q_{mn}(z)=g_{n}(z;\mathbf{x}_{m}-\mathbf{x}_{n}), \quad m\neq n,\\
  q_{mm}(z)=\frac{i\sqrt{z}}{4\pi}
  \end{array}\right. . 
  \end{array} \end{equation}
 Using the same arguments as above, it is easy to check that any non-zero linear combination of functions $g_{n}(z;\mathbf{x})$ doesn't belong to $\mathcal{D}(A)$. Besides,  $Q(z)$ is a holomorphic in the open upper and lower half-planes infinite matrix function defining at each non-real $z$ a bounded operator in the space $\mathbf{l}_{2}$ such that 
  \begin{itemize}
  \item{$Q(z)^{*}=Q(\bar{z}), \; \mathrm{z}\neq 0$;}
  \item{for any non-real $z,z_{0}$
  \begin{equation}
  Q(z)-Q(z_{0})=(z-z_{0})\left( \left(g_{m}(z),g_{n}(\bar{z_{0}})\right)\right)^{T}_{1\leq m,n<\infty}. 
  \end{equation}
  }
  \end{itemize} 
 As follows, dy virtue of Theorem \ref{kreinmain}  for any invertible Hermitian matrix $W=\left(w_{mn} \right)_{m,n=1}^{N}$ the operator function 
  \begin{equation}\label{exampW}
     R_{\alpha}(z)=R(z)-\sum\limits_{m,n=1}^{N}\left([ Q(z)+W]^{-1}\right)_{mn} \left(\cdot,g_{n}(\bar{z};\cdot) \right)g_{m}(z;\cdot)\end{equation} is the resolvent of selfadjoint operator $-A_{W}$ in $\mathbf{L}_{2}(\mathbf{R}_{3})$.
  
 Let us denote by $\mathcal{N}$ the linear span of functions $\{g_{n}(z;\mathbf{x})\}$.  The operator $A_{W}$ is loosely speaking the Laplace differential operator $-\Delta$ with the domain 
 
\begin{equation}\label{simpleW}
\begin{array}{c}      
     \mathcal{D}_{W}:=  \left\lbrace f: \, f=f_{0}+g, \, f\in {H}_{2}^{2}\left(\mathbf{R}_{3}\right), g\in\mathcal{N}, \right. \\  
        \underset{\rho_{m}\rightarrow 0}{\lim}\left[\frac {\partial} {\partial\rho_{m}}\left(\rho_{m}f(\mathbf{x})\right)\right]+\sum\limits_{n=1}^{N}w_{mn} 
\underset {\rho_{n}\rightarrow 0}{\lim}\,[\rho_{n}\,f(\mathbf{x})] =0, \\ \left. \rho_{n}=|\mathbf{x}-\mathbf{x}_{n}|, \quad 1\leq n\leq N  \right\rbrace .  \end{array} 
 \end{equation} 
  
  If the matrix $W$ is diagonal, that is $w_{mn}=\alpha_{m}\cdot \delta_{mn}$, then $A_{W}$ is the Laplace operator perturbed by a collection of "null-range" potentials
  $$\underset{\rho_{m}\rightarrow 0}{\lim}\left[ \frac {\partial} {\partial\rho_{m}}\left(\rho_{m}f(\mathbf{x})\right)+ \alpha_{m}\cdot\rho_{m}f(\mathbf{x})\right] =0. $$
  
 With some reservations the last statements remain true also in the case of the infinite set of points $\{\mathbf{x}_{n}\}_{-\infty}^{\infty}$. Let the set of functions $ g_{n}(z;\mathbf{x})$ and infinite matrix function $Q(z)$ be like in $($\ref{genW}$)$ and $\mathcal{N}$ denotes the closed linear span of functions $ g_{n}(z;\mathbf{x})$.
 \begin{thm}[A. Grossmann, R. H\o{}egh-Krohn, M. Mebkhout]\label{kreinmain+}
 
If $$ \underset{-\infty<m,n<\infty}{\inf}\left|\mathbf{x}_{m}-\mathbf{x}_{n} \right|=d>0,   $$  then for a selfadjoint operator  in $\mathbf{l}_{2}$  defined by the infinite matrix $W=\left( w_{mn}\right)_{m,n=-\infty}^{\infty} $ the operator function
  $$ R_{W}(z)=R(z)-\sum\limits_{m,n=-\infty}^{\infty}\left([ Q(z)+W]^{-1}\right)_{mn} \left(\cdot,g_{n}(\bar{z};\cdot) \right)g_{m}(z;\cdot) $$ is the resolvent of selfadjoint operator $-\Delta_{W}$ in $\mathbf{L}_{2}(\mathbf{R}_{3})$, which is the Laplace operator with the domain
   \[
      \mathcal{D}_{W}:=  \left\lbrace f: \, f=f_{0}+g, \, f_{0}\in {H}_{2}^{2}\left(\mathbf{R}_{3}\right), \, g\in \mathcal{N}, \right.  \] 
      
      \[ \left. \begin{array}{c} \underset{\rho_{m}\rightarrow 0}{\lim}\left[\frac {\partial} {\partial\rho_{m}}\left(\rho_{m}f(\mathbf{x})\right)\right]+\sum\limits_{n=-\infty}^{\infty}w_{mn} \underset
   {\rho_{n}\rightarrow 0}{\lim}\,[\rho_{n}\,f(\mathbf{x})] =0, \\ \rho_{n}=|\mathbf{x}-\mathbf{x}_{n}|, \quad -\infty\leq n< \infty .  \end{array}\right.  \]
  \end{thm}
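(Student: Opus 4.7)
The plan is to deduce this from Theorem~\ref{kreinmain} applied with $N=\infty$, $A=-\Delta$, $\{g_{n}(z;\cdot)\}_{n\in\mathbb{Z}}$ and $Q(z)$ as in (\ref{genW}). Once the hypotheses of that theorem are verified, formula (\ref{krein4}) already identifies $R_{W}(z)$ as the resolvent of a selfadjoint operator, which we rename $-\Delta_{W}$; the boundary-condition description of $\mathcal{D}_{W}$ then drops out by reading off the singular behavior of $R_{W}(z)h$ at each $\mathbf{x}_{n}$. So the real work is to verify, for infinite $N$, that $\{g_{n}(z;\cdot)\}$ is a Riesz basis in its closed linear span $\mathcal{N}$ with $\mathcal{N}\cap\mathcal{D}(A)=\{0\}$, and that $Q(z)$ generates a bounded operator on $\mathfrak{l}_{2}$.

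For the Riesz basis property I would exploit the explicit kernel
\[
g_{n}(z;\mathbf{x})=\frac{1}{4\pi}\,\frac{e^{i\sqrt{z}\,|\mathbf{x}-\mathbf{x}_{n}|}}{|\mathbf{x}-\mathbf{x}_{n}|},\qquad \mathrm{Im}\sqrt{z}>0,
\]
whose decay is exponential at rate $\mathrm{Im}\sqrt{z}$. A direct computation shows that the Gram matrix $\Gamma_{mn}(z)=(g_{m}(z;\cdot),g_{n}(z;\cdot))$ has constant positive diagonal $\Gamma_{nn}(z)=(8\pi\,\mathrm{Im}\sqrt{z})^{-1}$ and off-diagonal entries dominated by $C\,e^{-2\,\mathrm{Im}\sqrt{z}\,|\mathbf{x}_{m}-\mathbf{x}_{n}|}/|\mathbf{x}_{m}-\mathbf{x}_{n}|$. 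The separation hypothesis $|\mathbf{x}_{m}-\mathbf{x}_{n}|\ge d$, together with the elementary packing estimate in $\mathbf{R}_{3}$ that the number of configuration points in a ball of radius $r$ is $O((r/d)^{3})$, makes $\sum_{n\ne m}|\Gamma_{mn}(z)|$ finite and uniformly bounded in $m$; Schur's test then gives boundedness of $\Gamma(z)$ on $\mathfrak{l}_{2}$, and the same estimate applied to the entries $q_{mn}(z)$ in (\ref{genW}) gives boundedness of $Q(z)$. For the lower bound I would choose $z=i\lambda$ with $\lambda$ so large that $\mathrm{Im}\sqrt{z}$ renders the off-diagonal row sum strictly smaller than the diagonal of $\Gamma(i\lambda)$; then $\Gamma(i\lambda)$ is strictly diagonally dominant and positive, which forces $\{g_{n}(i\lambda;\cdot)\}$ to be a Riesz basis in $\mathcal{N}$. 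The identity (\ref{res4}) then propagates the Riesz basis property to every non-real $z$.

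The same separation condition rules out a nonzero $h\in\mathcal{N}$ from lying in $\mathcal{D}(A)=H^{2}_{2}(\mathbf{R}_{3})$: expanding $h=\sum c_{n}g_{n}(z;\cdot)$ with $(c_{n})\in\mathfrak{l}_{2}$ and some $c_{m}\ne 0$, the Cauchy--Schwarz bound $\bigl|\sum_{n\ne m}c_{n}g_{n}(z;\mathbf{x})\bigr|^{2}\le\|c\|_{\mathfrak{l}_{2}}^{2}\sum_{n\ne m}|g_{n}(z;\mathbf{x})|^{2}$ together with the packing estimate keeps the tail uniformly bounded for $\mathbf{x}$ near $\mathbf{x}_{m}$, so $\lim_{\rho_{m}\downarrow 0}\rho_{m}h(\mathbf{x})=c_{m}/(4\pi)\ne 0$, incompatible with the H\"older continuity of $H^{2}_{2}$-representatives noted after (\ref{different}). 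The remaining identities $Q(z)^{*}=Q(\bar z)$ and $Q(z)-Q(z_{0})=(z-z_{0})\bigl((g_{m}(z),g_{n}(\bar z_{0}))\bigr)^{T}$ follow from (\ref{hilb1}) and the definition of $Q(z)$ in (\ref{genW}). Theorem~\ref{kreinmain} now yields the resolvent statement, and the boundary-condition form of $\mathcal{D}_{W}$ is obtained by writing $f=R_{W}(z)h=f_{0}+g$ with $f_{0}=R(z)h\in H^{2}_{2}$ and $g\in\mathcal{N}$, computing the limits $\lim\rho_{m}f$ and $\lim\partial_{\rho_{m}}(\rho_{m}f)$ via (\ref{different}), and invoking the algebraic identity $W[Q(z)+W]^{-1}=I-Q(z)[Q(z)+W]^{-1}$.

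The main obstacle is the Riesz basis step: producing a uniform lower bound on the infinite Gram matrix. The diagonal-dominance trick at large $\lambda$ handles it, but it is essential that the spectral parameter can be pushed far from the real axis so that exponential decay overpowers the polynomial density of points allowed by the separation bound. Once this is in place, every remaining item is either a routine algebraic check or reduces to the singularity analysis sketched above.
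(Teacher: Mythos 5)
Your proposal is correct and follows essentially the same route as the paper: reduce to Theorem~\ref{kreinmain} with $N=\infty$, prove boundedness of $Q(z)$ on $\mathfrak{l}_{2}$ by a Schur-type row-sum estimate using exponential decay plus the packing count forced by the separation $d>0$, and obtain the Riesz basis property from diagonal dominance of the Gram matrix at a spectral point pushed far from the real axis (the paper uses $z=-\kappa^{2}$ with $\kappa$ large, where $\Gamma(-\kappa^{2})=\frac{1}{8\pi\kappa}(e^{-\kappa|\mathbf{x}_{m}-\mathbf{x}_{n}|})$ becomes a small perturbation of a multiple of the identity, rather than your $z=i\lambda$, but the mechanism is identical). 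Your explicit verification that no nonzero $\mathfrak{l}_{2}$-combination of the $g_{n}$ lies in $H^{2}_{2}(\mathbf{R}_{3})$, and the derivation of the boundary conditions via $W[Q(z)+W]^{-1}=I-Q(z)[Q(z)+W]^{-1}$, are details the paper leaves implicit.
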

Theorem \ref{kreinmain+} is a direct consequence of  Theorem \ref{kreinmain} and the following proposition.
\begin{prop}
If \begin{equation}\label{mindist}
\underset{m,n}{\inf}\left|\mathbf{x}_{m}-\mathbf{x}_{n} \right|=d>0   
\end{equation}  
and $\mathrm{Im} z\neq 0,$ then $Q(z)$ is the matrix of bounded operator in the natural basis of $\mathbf{l}_{2}$ and the sequence of $\mathbf{L}_{2}(\mathbf{R}_{3})$-vectors $\left\lbrace g_{n}(z;\cdot) \right\rbrace_{1}^{\infty}$ is the Riesz basis in its closed linear span $\mathcal{N}$.
\end{prop}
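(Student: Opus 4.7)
The plan is to reduce both assertions to Schur-test estimates made possible by the exponential decay in the Helmholtz-type kernels $g_n(z;\mathbf{x})$ combined with the uniform separation $d>0$.

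For the boundedness of $Q(z)$ on $\mathbf{l}_{2}$, the off-diagonal entries satisfy $|q_{mn}(z)|=e^{-\kappa d_{mn}}/(4\pi d_{mn})$ with $\kappa:=\mathrm{Im}\sqrt{z}>0$ and $d_{mn}:=|\mathbf{x}_m-\mathbf{x}_n|\geq d$, while the diagonal reduces to the constant $|\sqrt{z}|/(4\pi)$. Because the balls $B(\mathbf{x}_n,d/2)$ are pairwise disjoint, a standard volume-packing argument bounds the number of centres $\mathbf{x}_m$ lying in the shell $kd\leq d_{mn}<(k+1)d$ by $C(k+1)^{2}$. Hence
\[
\sup_{n}\sum_{m\neq n}|q_{mn}(z)|\;\leq\;\frac{C}{4\pi d}\sum_{k=1}^{\infty}k\,e^{-\kappa k d}<\infty ,
\]
and the symmetric column sum is bounded in the same way. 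Schur's test then yields the desired boundedness of $Q(z)$ on $\mathbf{l}_{2}$.

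For the Riesz basis claim, the key identity is that the Gram matrix $\bigl((g_m(z),g_n(z))\bigr)_{m,n}$ equals, up to transposition, $\mathrm{Im}\,Q(z)/\mathrm{Im}\,z$, which is obtained by putting $z_{0}=\bar{z}$ in the defining relation of $Q$. Consequently, $\{g_{n}(z;\cdot)\}$ is a Riesz basis in its closed linear span $\mathcal{N}$ if and only if this self-adjoint $\mathbf{l}_{2}$-operator is bounded (done above) and bounded below by a positive multiple of $I$. Using the transition formula $g_{n}(z)=U_{z_{0}}(z)\,g_{n}(z_{0})$ with $U_{z_{0}}(z)=(A-z_{0}I)(A-zI)^{-1}$ bounded and boundedly invertible (see the proof of Theorem \ref{kreinmain}), it suffices to establish the lower bound for one conveniently chosen $z_{0}$, and then transport the property to every non-real $z$.

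I would specialise to $z_{0}=it$ with $t>0$ large, so that $\mathrm{Im}\sqrt{z_{0}}=\sqrt{t/2}$ is large. The diagonal of $\mathrm{Im}\,Q(z_{0})/\mathrm{Im}\,z_{0}$ is the constant $c_{0}(t)=1/(4\pi\sqrt{2t})$, while by the estimate of the first step the off-diagonal part has Schur norm of order $e^{-\sqrt{t/2}\,d}/(t\,d)$ up to a polynomial factor in $t$. Since this decays exponentially in $t$ whereas $c_{0}(t)$ decays only like $1/\sqrt{t}$, for $t$ sufficiently large the off-diagonal part has operator norm strictly less than $c_{0}(t)$, giving $\mathrm{Im}\,Q(z_{0})/\mathrm{Im}\,z_{0}\geq (c_{0}(t)/2)\,I$ on $\mathbf{l}_{2}$; the Riesz basis property at $z_{0}$ then propagates to every non-real $z$ via $U_{z_{0}}(z)$. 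The principal obstacle is precisely this lower bound: at a generic $z$ near the real axis the off-diagonal Gram entries could in principle rival or dominate the diagonal, so the freedom to pass deep into the upper half-plane, where the kernel is strongly damped, and then transport the estimate by the bounded invertible $U_{z_{0}}(z)$, is the critical device.
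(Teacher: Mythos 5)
Your proposal is correct and follows essentially the same route as the paper: a Schur test with a shell-counting (volume-packing) estimate for the boundedness of $Q(z)$, and a diagonal-dominance argument for the Gram matrix at one point deep in the resolvent set, transported to all non-real $z$ by the bounded, boundedly invertible operator $U_{z_{0}}(z)$. The only difference is cosmetic: the paper evaluates the Gram matrix at $z_{0}=-\kappa^{2}$, where it has the explicit form $\frac{1}{8\pi\kappa}\left(e^{-\kappa|\mathbf{x}_{m}-\mathbf{x}_{n}|}\right)$, whereas you take $z_{0}=it$ and use the identity $\Gamma^{T}=\mathrm{Im}\,Q(z)/\mathrm{Im}\,z$; both choices make the off-diagonal part exponentially small relative to the diagonal.
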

\begin{proof}
By (\ref{genW}) and (\ref{mindist}) for $\mathrm{Im}\sqrt{z}=\eta+i\kappa$ with $\kappa>0$ we see that 
\begin{equation*}
\sum\limits_{n}^{}\left|q_{mn}(z) \right|\leq \frac{\sqrt {\eta^{2}+\kappa^{2} }}{4\pi}+\frac{1}{4\pi d}\sum\limits_{n\neq m}^{}e^{-\kappa\left|\mathbf{x}_{n}-\mathbf{x}_{m} \right|}<\infty .
\end{equation*}
and noting that there are at most $3n^{2}+\frac{1}{4}$ points $\mathbf{x}_{n}$ in the spherical layer $(n-\frac{1}{2})d\leq\left|\mathbf{x}-\mathbf{x}_{m} \right|<(n+\frac{1}{2}), \, n\geq 1,$ 
obtain that 
\begin{equation}\label{estim1}\begin{array}{c}
\sum\limits_{n\neq m}^{}e^{-\kappa\left|\mathbf{x}_{n}-\mathbf{x}_{m} \right|}\leq \frac{13}{4}e^{-\kappa d}+\sum\limits_{2}^{\infty}\left(3n^{2}+\frac{1}{4} \right)e^{-\left( n-\frac{1}{2}\right)\kappa d } \\ \underset{\kappa\rightarrow\infty}{=}\leq \frac{13}{4}e^{-\kappa d}+O\left(e^{-\frac{3}{2}\kappa d} \right) ). \end{array} 
\end{equation}
Hence for non-real $z$ 
\begin{equation*}
M(z)=\underset{m}{\sup}\sum\limits_{n}^{}\left|q_{mn}(z) \right|<\infty
\end{equation*}
and the infinite matrix $Q(z)$ generates a bounded operator in $\mathbf{l}_{2}$ with norm $\left\|Q(z)\right\| \leq M(z)$.

As was mentioned in the proof of  Theorem \ref{kreinmain}, in order to establish that for any regular point $z$ of the Laplace operator $A$ the set of $\mathbf{L}_{2}(\mathbf{R}_{3})$-functions $ \{g_{n}(z)=g(z;\mathbf{x}-\mathbf{x}_{n} )\}$ forms a Riesz basis in its linear span, it suffices to verify this for at least one such point, say for a point $-\kappa ^{2}$, where $\kappa$ is a sufficiently large positive number. For $z=-\kappa^{2}$ the Gramm-Schmidt matrix for the set of functions $ \{g_{n}(-\kappa^{2})$ has form 
\begin{equation}\label{gramm1}
\Gamma(-\kappa^{2})=\frac{1}{8\pi\kappa}\left(e^{-\kappa\left|\mathbf{x}_{m}-\mathbf{x}_{n} \right| } \right)_{-\infty}^{\infty} .
\end{equation}
By (\ref{gramm1}) the matrix $8\pi\kappa\cdot\Gamma(-\kappa^{2})$ is the sum $I+\Delta(-\kappa^{2})$ of the infinite unity matrix $I$ and the matrix $\Delta(-\kappa^{2})$, which  according to (\ref{estim1}) generates a bounded operator in $\mathbf{l}_{2}$ with norm of less than one for for sufficiently large $\kappa$. Therefore the  matrix $\Gamma(-\kappa^{2})$ generates a bounded and boundedly invertible operator in $\mathbf{l}_{2}$. Hence vectors $ \{g_{n}(-\kappa^{2})$ form a Riesz basis in their linear span.  
\end{proof}
\section{Singular perturbations of selfadjoint Laplace operator. 1D-located perturbation}
We describe further a special class of singular selfadjoint  perturbations of the Laplace operator $A$ falling under the conditions of Theorem \ref{kreinnext}. In the cases discussed below, $\mathbf{L}_{2}(\mathbf{R}_{3})$ plays naturally the role of Hilbert space $\mathcal{H}$, the usual space $\mathbf{L}_{2}([0,l])$ of square integrable functions on the interval $[0,l]$ with $l<\infty$ appears as the Hilbert space $\mathcal{K}$   wherein this interval itself is identified with the subset $\mathbf{\mathfrak{l}}=\left\lbrace 0\leq x_{1}\leq l, \, x_{2}=0, \,  x_{3}=0 \right\rbrace $  of $\mathbf{R}_{3}$.
 We define the holomorphic operator function $G(z), \, \mathrm{Im}(z)\neq 0$ from $\mathbf{L}_{2}([0,l])$ to $\mathbf{L}_{2}(\mathbf{R}_{3})$ setting
 \begin{equation}\label{stl}\begin{array}{c}
 \left(G(z)u\right)(\mathbf{x})=\int\limits_{0}^{l}g\left(z|x_{1},x_{2},x_{3};x^\prime_{1} ,0,0 \right)u(x^\prime_{1})dx^\prime_{1}, \quad u(\cdot)\in \mathbf{L}_{2}([0,l]),  \\   g\left(z|x_{1},x_{2},x_{3};x^\prime_{1},x^\prime_{2},x^\prime_{3} \right)=g(z|\mathbf{x},\mathbf{x}^{\prime}) =\frac{1}{4\pi}
 \frac{e^{i\sqrt{z}|\mathbf{x}-\mathbf{x}^\prime|}}{|\mathbf{x}-\mathbf{x}^\prime|}, \quad \mathrm{Im}{\sqrt{z}}>0 .
 \end{array}
 \end{equation} 
 It follows from (\ref{stl}) that
 \begin{eqnarray*}
  \left| \left(G(z)u\right)(\mathbf{x})\right|^{2} \leq \int\limits_{0}^{l}\left| g\left(z|x_{1}-x^{\prime}_{1},x_{2},x_{3};0 ,0,0 \right)\right|^{2}dx^{\prime}_{1} \cdot\left\|u\right\|^{2}  .
 \end{eqnarray*}
Therefore for $z\neq 0$ the operator $G(z)$ is bounded and
\begin{equation}\label{norm}
\left\| G(z)\right\|\leq \frac{1}{\sqrt{8\pi\mathrm{Im}\sqrt{z}}}.
\end{equation}

Note that the Fourier transform $$\widehat{G(z)u}\left(k_{1},k_{2},k_{3}\right)=\frac{1}{2\pi}\cdot\frac{1}{k_{1}^{2}+k_{2}^{2}+k_{3}^{2}-z}\hat{u}(k_{1}) $$ of $G(z)u(\mathbf{x})$, where $$\hat{u}(k_{1})=\frac{1}{\sqrt{2\pi}}\int\limits_{0}^{l}e^{-ik_{1}x_{1}}u(x_{1})dx_{1},$$  equals to zero if and only if $u(k_{1})\equiv0$ and as follows $\hat{u}(x_{1})=0$ almost everywhere on $[0,l]$. Accordingly, \textit{ $G(z)u(\cdot)=0$ in $\mathbf{L}_{2}(\mathbf{R}_{3})$ if and only if $u(\cdot)=0$ in $\mathbf{L}_{2}([0,l])$}. Therefore for any non-real ${z}$ is not an eigenvalue of $G(z)^{*}G(Z)$.
We note that the adjoint operator $G(z)^{*}$ from $\mathbf{L}_{2}(\mathbf{R}_{3})$ to $\mathbf{L}_{2}([0,l])$ is determined by an expression 
\begin{equation}\label{conj}
\left(G(z)^{*}f\right)(x)= \int_{\mathbf{R}_{3}}g(\bar{z}|x,0,0;\mathbf{x}^{\prime})f(\mathbf{x}^\prime)d\mathbf{x}^\prime, \quad f(\cdot)\in \mathbf{L}_{2}(\mathbf{R}_{3}), x\in[0,\mathfrak{l}],
\end{equation}
that makes sense, since the functions $$f(\mathbf{x})=\int_{\mathbf{R}_{3}}g(\bar{z}|\mathbf{x};\mathbf{x}^{\prime})f(\mathbf{x}^\prime)d\mathbf{x}^\prime, \quad f(\cdot)\in \mathbf{L}_{2}(\mathbf{R}_{3}), \quad \mathrm{Im}z\neq 0, $$
forming the domain $\mathcal{D}(A)$ of $A$ are continuous \cite{Kat}.

Suppose further that there is a vector $h\in \mathbf{L}_{2}(\mathbf{R}_{3})$ from the subspace $\mathcal{N} =\overline{G(z)\mathbf{L}_{2}([0,l])}$ that belongs to the domain $\mathfrak{D}(A)$ of the Laplace operator $A$. $h$ as any vector from  can be represented in the form $h=R(z)w$ with some $w\in\mathbf{L}_{2}(\mathbf{R}_{3})$ while by our assumption there is a sequence of vectors $\{u_{n}\in \mathbf{L}_{2}([0,l]) \}$ such that
\begin{equation}\label{lim-a}
\underset{n\rightarrow \infty}{\lim}\|R(z)w-G(z)u_{n}\|_{\mathbf{L}_{2}(\mathbf{R}_{3})}=0.
\end{equation}   
Now recall that for each $w\in\mathbf{L}_{2}(\mathbf{R}_{3})$ and any infinitesimal $\varepsilon>0$ it is possible to find an infinitely smooth compact function $\phi(\mathbf{r})$ which is also equal to zero at some neighborhood of the subset $\mathfrak{l}$ to satisfy the condition 
\begin{equation}\label{lim-b}
\left| \left(w,\phi \right)_{\mathbf{L}_{2}(\mathbf{R}_{3})} \right|\geq\left(1-\varepsilon \right)\left\| w \right\|^{2}_{\mathbf{L}_{2}(\mathbf{R}_{3}}.
\end{equation}
Taking into account further that for $\phi(\mathbf{r})$, as well as for any smooth compact function,
\begin{equation}\label{laplres}
\phi(\mathbf{r})= \frac{1}{4\pi}
\int_{\mathbf{R}_{3}}\frac{e^{i\sqrt{z}|\mathbf{x}-\mathbf{x}^\prime|}}{|\mathbf{x}-\mathbf{x}^\prime|}\left[ -\Delta\phi\left(\mathbf{x}^\prime \right)-z\phi\left(\mathbf{x}^\prime \right)\right]  d\mathbf{x}^\prime, 
\end{equation}
we notice that 
\begin{eqnarray*}\begin{array}{c}
\left(G(z)w,\left[-\Delta\phi -\bar{z}\phi \right] \right)_{\mathbf{L}_{2}(\mathbf{R}_{3})}=\left(w,\phi\right)_{\mathbf{L}_{2}(\mathbf{R}_{3})}, \\  \left(G(z)u,\left[-\Delta\phi -\bar{z}\phi \right]\right)_{\mathbf{L}_{2}(\mathbf{R}_{3})} =0, \quad u\in \mathbf{L}_{2}([0,l]).		
\end{array}
\end{eqnarray*}
Hence for the above sequence $\{u_{n}\in \mathbf{L}_{2}([0,l]) \}$ by virtue of (\ref{lim-b}) we conclude that 
\begin{equation}\label{lim-c}
\begin{array}{c}
\|R(z)w-G(z)u_{n}\|_{\mathbf{L}_{2}(\mathbf{R}_{3})}\cdot
\left\| -\Delta\phi-\bar{z}\phi  \right\|_{\mathbf{L}_{2}(\mathbf{R}_{3})} \\ \geq 
\left|\left( \left[R(z)w-G(z)u_{n}\right],\left[-\Delta\phi-\bar{z}\phi \right] 
\right)_{\mathbf{L}_{2}(\mathbf{R}_{3})}\right|  =\left|\left(w,\phi\right)_{\mathbf{L}_{2}(\mathbf{R}_{3})}\right| \\ \geq\left(1-\varepsilon\right)
\left\| w\right\|^{2}_{\mathbf{L}_{2}(\mathbf{R}_{3})}.
\end{array}
\end{equation}
But in view of (\ref{lim-a}) for $n \rightarrow\infty $ the last  inequality in (\ref{lim-c}) must necessarily be violated unless  $ w=0 $. Therefore $\mathcal{N} \cap \mathfrak{D}(A)=\{0\} $. 

In accordance with our choice (\ref{stl}) of the mapping $G(z)$ , the bounded  holomorphic operator function $Q(z)$ in $\mathbf{L}_{2}([0,l])$ in the corresponding Theorem \ref{kreinnext} may be determined by setting
\begin{equation}\label{qfunk}\begin{array}{c}
\left( Q(z)u\right)(x)=\int\limits_{0}^{l} q(z|x,x^{\prime})u(x^{\prime})dx^{\prime} \\ \equiv\frac{1}{4\pi}\int\limits_{0}^{l} \frac{e^{i\sqrt{z}\left|x-x^{\prime}\right|}-1 }{\left|x-x^{\prime}\right|}u(x^{\prime})dx^{\prime}, \quad u\in\mathbf{L}_{2}([0,l]), \quad \mathrm{Im}\sqrt{z}>0.
\end{array}
\end{equation}
Since the kernel $q(z|x,x^{\prime})$ of integral operator $Q(z)$ is a continuous function on the set $[0,l]\times[0,l]$, then for any non-positive $z$ the operator $Q(z)$ is bounded and moreover compact. 

For the operator function $Q(z)$ defined by the expression (\ref{qfunk}) the property $Q(z)^{*}=Q(\bar{z})$ is obvious and the relation (\ref{res3}) follows immediately from the Hilbert identity for the resolvent kernel of the Laplace operator $A$: 
\begin{equation*}\begin{array}{c}
g(z|\mathbf{x},\mathbf{x}^{\prime})-g(z_{0}|\mathbf{x},\mathbf{x}^{\prime})=(z-z_{0}) \int_{\mathbf{R}_{3}}g(z_{0}|\mathbf{x},\mathbf{x}^{\prime\prime})g(z|\mathbf{x}^{\prime\prime},\mathbf{x}^{\prime})d\mathbf{x}^{\prime\prime},\\ \mathrm{Im}z_{0},\mathrm{Im}z \neq 0. 
\end{array}
\end{equation*}
in cases where $\mathbf{x}=(x_{1}=x,x_{2}=0,x_{3}=0), \,\mathbf{x}^{\prime}=(x_{1}=x^{\prime},x_{2}=0,x_{3}=0)$.

Finally,  in the case under consideration we can take as $L$ in (\ref{krein5}) the selfadjoint Sturm-Liouville operator  $$L=-\frac{d^{2}}{dx^{2}}+\mathfrak{v}(x) $$
 in $\mathbf{L}_{2}([0,l])$ with a real continuous "potential" $\mathfrak{v}(x)$ assuming that the domain $\mathcal{D}(L)$ of $L$ consists of functions $u(x)$ from the Sobolev class $H^{2}_{2}([0,l])$ satisfying the boundary conditions $u(0)=u(l)=0$.  We confine ourselves also to only those potentials $\mathfrak{v}(x)$ for which zero is not an eigenvalue of the operator $L$. Since the concerned Sturm-Liouville operators are semi-bounded from below, have simple discrete spectrum and for their eigenvalues $\lambda_{n} $ numbered in increasing order, we have the relation $$\lambda_{n}\underset{n\rightarrow\infty}{=}\frac{\pi^{2}n^{2}}
 	{l^{2}}\left[1+ O(\frac{1}{n})\right], $$ then $L^{-1}$ is a compact operator of trace class.

Thus, the related to the Laplace operator operator $A$ operator functions $G(z)$ and $Q(z)$ and defined by formulas (\ref{stl}) and (\ref{qfunk}),respectively and the introduced selfadjoint Sturm-Liouville operator $L$ in $\mathbf{L}_{2}([0,l])$ satisfy all the conditions of Theorem \ref{kreinnext}. Hence, the operator function $R_{L}(z)$ defined by the expression (\ref{krein5}) is the resolvent of some singular perturbation $A_{L}$ of $A$ in $\mathbf{L}_{2}(\mathbf{R}_{3})$.
\begin{prop}\label{local1}
	Any smooth compact function $\phi(\mathbf{r})$, which is equal to zero at some neighborhood of the subset $\mathfrak{l}$ belongs to $\mathcal{D}(A_{L})$ and $$\left(A_{L}\phi\right)(\mathbf{r})=\left(A\phi\right)(\mathbf{r}) 
	=-\Delta\phi(\mathbf{r}).$$
\end{prop}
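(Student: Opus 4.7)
The plan is to show that $\phi$ lies in the range of $R_L(z)$ for some fixed non-real $z$, by checking that the perturbation term in \eqref{krein5} kills the appropriate source vector. Specifically, set
\[
w := -\Delta\phi - z\phi.
\]
Since $\phi$ is smooth and compactly supported, $\phi \in H_2^2(\mathbf{R}_3) = \mathcal{D}(A)$, hence $(A-zI)\phi = w$ and thus $R(z)w = \phi$. The goal reduces to proving $R_L(z)w = \phi$, which by \eqref{krein5} amounts to showing
\[
G(z)\bigl[Q(z)+L\bigr]^{-1} G(\bar z)^{*} w = 0,
\]
and the cleanest way to get this is to prove the stronger statement $G(\bar z)^{*} w = 0$.

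The key computation is the identification of $G(\bar z)^{*}w$ as the restriction of $R(z)w$ to the segment $\mathfrak{l}$. Indeed, from the explicit formula \eqref{conj} for the adjoint,
\[
\bigl(G(\bar z)^{*} w\bigr)(x) = \int_{\mathbf{R}_3} g(z\,|\,x,0,0;\mathbf{x}')\,w(\mathbf{x}')\,d\mathbf{x}', \qquad x\in[0,l].
\]
Because $\phi$ (and hence $w$) vanishes in a neighborhood of $\mathfrak{l}$, the integrand has no singularity when $(x,0,0)\in\mathfrak{l}$: the kernel $g(z\,|\,x,0,0;\mathbf{x}')$ is bounded on the support of $w$, so the integral is well-defined pointwise and equals $(R(z)w)(x,0,0) = \phi(x,0,0)$ by \eqref{lapl}. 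Since $\phi$ vanishes on a neighborhood of $\mathfrak{l}$, we have $\phi(x,0,0)=0$ for all $x\in[0,l]$, giving $G(\bar z)^{*}w=0$.

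Combining these observations, $R_L(z)w = R(z)w - G(z)[Q(z)+L]^{-1}\cdot 0 = \phi$. Hence $\phi \in R_L(z)\mathbf{L}_2(\mathbf{R}_3) = \mathcal{D}(A_L)$, and applying $A_L - zI$,
\[
(A_L - zI)\phi = w = -\Delta\phi - z\phi,
\]
so $A_L\phi = -\Delta\phi = A\phi$, as claimed. The argument is essentially a one-line calculation once the observation that $G(\bar z)^{*}$ acts as trace along $\mathfrak{l}$ of the Laplace resolvent is made; the only mild subtlety is justifying the pointwise evaluation $(R(z)w)(x,0,0)$ on $\mathfrak{l}$, which is immediate from the support condition on $\phi$ that removes the diagonal singularity of the Green's function.
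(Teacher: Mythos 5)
Your proposal is correct and follows essentially the same route as the paper: both reduce the claim to showing $G(\bar z)^{*}[-\Delta\phi-z\phi]=0$ by identifying this expression with the trace $\phi(x,0,0)$ along $\mathfrak{l}$ via the adjoint formula and the identity $R(z)[-\Delta\phi-z\phi]=\phi$, and then conclude $R_L(z)[-\Delta\phi-z\phi]=\phi$ from the Krein formula. Your added remark justifying the pointwise evaluation from the support condition is a reasonable (and slightly more explicit) version of what the paper leaves implicit.
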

\begin{proof} 
By virtue of (\ref{conj}) , the identity (\ref{laplres}) and the assumptions of Proposition $$\left( G(\bar{z})^{*}[-\Delta\phi-z\phi ]\right)(x)=\phi(x,0,0)=0, \quad x\in[0,l].  $$ In accordance with (\ref{krein5}) this means that 
\begin{equation}\label{res7}
\left(R_{L}(z)[-\Delta\phi-z\phi]\right)(\mathbf{r})   = \left(R(z)[-\Delta\phi-z\phi]\right)(\mathbf{r})=\phi(\mathbf{r}).
\end{equation}
 Therefore $\phi\in\mathcal{D}(A_{L})\cap\mathcal{D}(A)$ and in view of (\ref{res7}) 
 \begin{equation*}\begin{array}{c} \left(A_{L}\phi\right)(\mathbf{r})=-\Delta\phi(\mathbf{r})
 -z\phi(\mathbf{r})+
	z\left(R_{L}(z)[-\Delta\phi-z\phi]\right)(\mathbf{r}) 
\\ =-\Delta\phi(\mathbf{r})-z\phi(\mathbf{r})+z\phi(\mathbf{r})=
 -\Delta\phi(\mathbf{r}). 
 \end{array}
 \end{equation*}
\end{proof}	
\begin{prop} Let $f(x_{1},x_{2},x_{3})$ be a function from $\mathcal{D}(A_{L})$ and $u_{f}(x), \, x\in[0,l],$ be defined by
\begin{equation}
u_{f}(x)=-\underset{\rho\rightarrow 0}{\lim}\frac{1}{\ln\left(  \rho^{2}\right) }f(x,x_{2},x_{3}), \quad \rho=\sqrt{x_{2}^{2}+x_{3}^{2}}. 
\end{equation}	
Then $u_{f}\in \mathcal{D}(L)$ and 
\begin{equation}\label{local2}\begin{array}{c}
\left( Lu_{f}\right)(x)=-4\pi\cdot\underset{\rho\rightarrow 0}{\lim}\left[f(x,x_{2},x_{3})+\ln\left( \frac{1}{\rho^{2}}\right)\cdot u_{f}(x)+2\ln{2}\cdot u_{f}(x) \right. \\ \left.-\int\limits_{0}^{l}\frac{s-x}{\left| s-x\right| }\ln\left| s-x\right|u_{f}^{\prime}\left(s \right)ds \right]. \end{array}
\end{equation}	
\end{prop}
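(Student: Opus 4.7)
The plan is to exploit the resolvent formula \eqref{krein5}: any $f\in\mathcal{D}(A_{L})$ has the form $f=R_{L}(z)h=R(z)h-G(z)v$ for some $h\in\mathbf{L}_{2}(\mathbf{R}_{3})$, where $v:=[Q(z)+L]^{-1}G(\bar z)^{*}h$. Because $Q(z)$ is bounded and $L^{-1}$ is compact, the factorisation $[Q(z)+L]^{-1}=[L^{-1}Q(z)+I]^{-1}L^{-1}$ (established in the proof of Theorem \ref{kreinnext}) shows that $[Q(z)+L]^{-1}$ maps $\mathbf{L}_{2}([0,l])$ boundedly into $\mathcal{D}(L)$, so $v\in H_{2}^{2}([0,l])$ and in particular $v(0)=v(l)=0$. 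The whole proof then reduces to two tasks: identifying $u_{f}$ as a fixed scalar multiple of $v$ (so that $u_{f}\in\mathcal{D}(L)$ is immediate), and rewriting the algebraic identity $Lv=G(\bar z)^{*}h-Q(z)v$ as the boundary-value relation \eqref{local2}.

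The central computation is the asymptotic expansion of $(G(z)v)(x,x_{2},x_{3})$ as $\rho=\sqrt{x_{2}^{2}+x_{3}^{2}}\to 0$. With $r=\sqrt{(x-s)^{2}+\rho^{2}}$ I would split $e^{i\sqrt{z}r}/r=(e^{i\sqrt{z}r}-1)/r+1/r$: the bounded piece contributes, after division by $4\pi$ and dominated convergence, the value $(Q(z)v)(x)$; the singular piece $(4\pi)^{-1}\int_{0}^{l}v(s)/r\,ds$ I integrate by parts against the antiderivative $\sinh^{-1}((s-x)/\rho)$ of $1/r$, using $v(0)=v(l)=0$ to annihilate the boundary terms. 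Writing $\sinh^{-1}((s-x)/\rho)=\ln\bigl[(s-x)+\sqrt{(s-x)^{2}+\rho^{2}}\bigr]-\ln\rho$ and noting that the isolated $\ln\rho$ multiplies $\int_{0}^{l}v'(s)\,ds=0$, then expanding the remaining logarithm separately on $\{s>x\}$ and $\{s<x\}$, one arrives at
\begin{equation*}
(G(z)v)(x,x_{2},x_{3})=(Q(z)v)(x)+\frac{v(x)}{4\pi}\bigl[\ln(1/\rho^{2})+2\ln 2\bigr]-\frac{1}{4\pi}\int_{0}^{l}v'(s)\frac{s-x}{|s-x|}\ln|s-x|\,ds+o(1).
\end{equation*}
The constant $2\ln 2$ is produced by the identity $\int_{0}^{l}v'(s)\,\frac{s-x}{|s-x|}\,ds=-2v(x)$, which is itself a consequence of the Dirichlet conditions.

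Since $R(z)h\in H_{2}^{2}(\mathbf{R}_{3})\hookrightarrow C^{0,\gamma}(\mathbf{R}_{3})$ is continuous up to $\mathfrak{l}$, it contributes only a bounded term in $f=R(z)h-G(z)v$, so the coefficient of $\ln(1/\rho^{2})$ in $f$ is supplied entirely by $G(z)v$ and is proportional to $v(x)$; reading it off identifies $u_{f}$ as a fixed scalar multiple of $v$, hence $u_{f}\in\mathcal{D}(L)$. The announced formula \eqref{local2} is then extracted by eliminating the traces $(G(\bar z)^{*}h)(x)=(R(z)h)(x,0,0)$ and $(Q(z)v)(x)$ between the asymptotic expansion of $f$ and the algebraic identity $Lv=G(\bar z)^{*}h-Q(z)v$, after which the $\ln(1/\rho^{2})$, $2\ln 2$, and principal-value $\ln|s-x|$ contributions are precisely the counter-terms whose sum with $f(x,x_{2},x_{3})$ has a finite $\rho\downarrow 0$ limit. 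The main obstacle is the asymptotic bookkeeping: the boundary conditions on $\mathcal{D}(L)$ must be used twice—once to legitimise the integration by parts, once to cancel the spurious $\ln\rho$—and both uses are essential; everything after that is a consolidation of the expansion into the displayed combination.
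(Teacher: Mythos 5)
Your argument is correct and follows essentially the same route as the paper: write $f=R_{L}(z)h$, note that $v=[Q(z)+L]^{-1}G(\bar z)^{*}h$ lies in $\mathcal{D}(L)$ and hence satisfies the Dirichlet conditions, extract the $\ln(1/\rho^{2})$ asymptotics of $(G(z)v)(x,x_{2},x_{3})$ to identify $u_{f}$ as a fixed multiple of $v$, and then convert the identity $Lv=G(\bar z)^{*}h-Q(z)v$ into (\ref{local2}). The only substantive difference is that you actually carry out the proof of the asymptotic expansion (the paper's Lemma \ref{simp1}, whose proof is left to the reader) via integration by parts against $\sinh^{-1}((s-x)/\rho)$ and the separate treatment of $s>x$ and $s<x$, which is precisely the intended argument.
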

\begin{proof}
Turning to the expressions (\ref{krein5}) and (\ref{conj}), we recall first of all that the functions from $\mathcal{D}(A)$ are continuous \cite{Kat}. Therefore for any $h(\mathbf{x})$ from $\mathbf{L}_{2}(\mathbf{R}_{3})$ the functions $\left( R(z)h\right) $ and $\left( G(z)^{*}h\right)(x)$ from $\mathbf{L}_{2}(\mathbf{R}_{3})$ and $\mathbf{L}_{2}([0,l])$, respectively are continuous. We also take into account that the domains of operators $L$ and $L+Q(z)$ coincide, since $Q(z)$ is a bounded operator. By our assumptions $"0"$ is a regular point of operator $L+Q(z), \, \mathrm{Im}z\neq 0.$ Therefore for any $h\in\mathbf{L}_{2}(\mathbf{R}_{3})$ the function
\begin{equation}\label{sob}
\hat{u}_{h}(x)=\left(\left[L+Q(z) \right]^{-1} G(z)^{*}h \right)(x) 
\end{equation}
belongs to $\mathcal{D}(L)$, that is to the Sobolev class $H^{2}_{2}([0,l])$ and satisfies the boundary conditions $\hat{u}_{h}(0)=\hat{u}_{h}(l)=0$. 

Writing any $f\in\mathcal{D}(A_{L})$ in the form $f(\mathbf{x})=\left( R_{L}(z)h\right)(\mathbf{x})$ with some $h\in\mathbf{L}_{2}(\mathbf{R}_{3})$ we can find the limiting value of $\left( R_{L}(z)h\right)(\mathbf{x}),$ when $\rho=\sqrt{x_{2}^{2}+x_{3}^{2}}\rightarrow 0$ and $x_{1}\in[0,l]$ using the following elementary assertion, the proof of which are left to the reader.
\begin{lem}\label{simp1}
Let $u(x)$ be continuously differentiable function on $[0,l]$ satisfying the conditions $u(0)=u(l)=0.$ Then
\begin{equation}\label{smp}\begin{array}{c}
\int\limits_{0}^{l}\frac{1}{\sqrt{(x-s)^{2}+\rho ^{2}}}u(s)ds\underset{\rho\rightarrow 0}{=} u(x)\cdot\ln\frac{1}{\rho^{2}} \\ +2\ln2u(x)-\int\limits_{0}^{l}\frac{s-x}{\left| s-x\right| }\ln\left| s-x\right|u^{\prime}\left(s \right)ds .
\end{array}
\end{equation}
\end{lem}
Using the expression (\ref{krein5}) for $\left( R_{L}(z)h\right)(\mathbf{x})$ and applying Lemma \ref{simp1} one can easily verify that
\begin{equation}
u_{f}(x)=-\underset{\rho\downarrow 0}{\lim}\frac{1}{\ln\left(  \rho^{2}\right) }f(x,x_{2},x_{3})=-\frac{1}{4\pi}\hat{u}_{h}
(x)\in\mathcal{D}(L)
\end{equation}
and
\begin{equation}\begin{array}{c}
\underset{\rho\downarrow 0}{\lim}\left[f(x,x_{2},x_{3})+\ln\left( \frac{1}{\rho^{2}}\right)\cdot u_{f}(x)+2\ln{2}\cdot u_{f}(x) \right. \\ \left.
-\int\limits_{0}^{l}\frac{s-x}{\left| s-x\right| }\ln\left| s-x\right|u_{f}^{\prime}\left(s \right)ds \right]=\left( G(z)^{*}h\right) \\-\left(Q(z)\left[L+Q(z) \right]^{-1} G(z)^{*}h \right)(x) =L\left(\left[L+Q(z) \right]^{-1} G(z)^{*}h \right)(x) \\
=L\hat{u}_{h}(x)=-\frac{1}{4\pi}\left( Lu_{f}\right)(x). 
\end{array}
\end{equation}
\end{proof}

\end{document}